\documentclass[a4paper,USenglish]{lipics-v2018}
%This is a template for producing LIPIcs articles.
%See lipics-manual.pdf for further information.
%for A4 paper format use option "a4paper", for US-letter use option "letterpaper"
%for british hyphenation rules use option "UKenglish", for american hyphenation rules use option "USenglish"
%for section-numbered lemmas etc., use "numberwithinsect"

\usepackage{microtype}%if unwanted, comment out or use option "draft"
\usepackage[ruled,vlined,linesnumbered]{algorithm2e}
\usepackage[inline]{enumitem}
\usepackage{subcaption}
\usepackage{graphicx}
\graphicspath{{./Graphics/}}%helpful if your graphic files are in another directory

\bibliographystyle{plainurl}% the recommended bibstyle

\nolinenumbers

% Author macros::begin %%%%%%%%%%%%%%%%%%%%%%%%%%%%%%%%%%%%%%%%%%%%%%%%%%%%%%%%%%%%%%%%%%%%%%%%%%%%%%%%%%%%%%%%%%%%%%%%%%%%%%%%%%%%%%%%%%
\title{On Restricted Disjunctive Temporal Problems: Faster Algorithms and Tractability Frontier}
\titlerunning{On RDTPs: Faster Algorithms and Tractability Frontier}

\author{Carlo Comin}{Department of Computer Science, University of Verona, Italy}{carlo.comin.86@gmail.com}{https://orcid.org/0000-0001-5748-2029}{}
\author{Romeo Rizzi}{Department of Computer Science, University of Verona, Italy}{romeo.rizzi@univr.it}{https://orcid.org/0000-0002-2387-0952}{}
\authorrunning{C. Comin, R. Rizzi}

\Copyright{Carlo Comin, Romeo Rizzi}

\subjclass{G.2.2 Graph Theory, I.2.8 Problem Solving, Control Methods, and Search}%%please refer to \url{http://www.acm.org/about/class/ccs98-html}}%
\keywords{Restricted Disjuctive Temporal Problems, Simple Temporal Networks, Hyper Temporal Networks,
	Consistency Checking, Single-Source Shortest-Paths, 2-SAT.}% mandatory: Please provide 1-5 keywords

\EventEditors{}
\EventNoEds{3}
\EventLongTitle{Preprint}
\EventShortTitle{Preprint}
\EventAcronym{}
\EventYear{2018}
\EventDate{}
\EventLocation{}
\EventLogo{}
\SeriesVolume{}
\EventYear{2018}
\ArticleNo{} % “LIPIcs article number” (=<article-no>) goes here!

%per non usare numeri come marker nelle note a piè pagina
\theoremstyle{plain}
\newtheorem{mytheorem}{\textbf{Theorem}}
\newtheorem{proposition}{\textbf{Proposition}}
\theoremstyle{definition}
\newtheorem{mydefinition}{Definition}
\newtheorem{mylemma}{Lemma}

\definecolor{colorbox}{RGB}{253,199,17}

\AtBeginDocument{%

}

\newcommand{\GHTP}[1][]{\textsc{General-HyTP#1}\xspace}
\newcommand{\GHTN}[1][]{\textsc{General-HyTN#1}\xspace}

\newcommand{\THTP}[1][]{\textsc{Tail-HyTP#1}\xspace}

\newcommand{\TONE}{\texttt{t}_1\xspace}
\newcommand{\TTWO}{\texttt{t}_2\xspace}
\newcommand{\TTHREE}{\texttt{t}_3\xspace}

\let\oldnl\nl
\newcommand{\nonl}{\renewcommand{\nl}{\let\nl\oldnl}}

\usepackage{tikz}
\usepackage{pgfplots}
\usetikzlibrary{positioning,backgrounds,decorations,arrows,shapes,fit,calc, patterns}
\pgfplotsset{compat=newest}%
\tikzset{>=latex}
\tikzstyle{node}=[circle,draw,inner sep=2pt,transform shape,minimum size=1.75em]
\tikzstyle{smallLabel}=[font=\sffamily\footnotesize,inner sep=1pt,outer sep=1pt,transform shape]
\tikzstyle{timeLabel}=[smallLabel,midway,transform shape]
\tikzstyle{multiHead}=[draw,->,dashed,rounded corners]
\tikzstyle{RLink} = [draw,->,rounded corners]
\tikzstyle{CLink} = [draw,->,double,rounded corners]
\tikzstyle{Connector} = [diamond, draw, minimum height=20pt,minimum width=20pt, fill=gray!20,align=center,thick]
\tikzstyle{Task} = [draw, minimum height=20pt,minimum width=40pt, fill=gray!20,align=center,thick]
\tikzstyle{Arc} = [draw,thick,->,rounded corners]

\tikzstyle{multiHead}=[dashed,transform shape]
\tikzstyle{multiTail}=[dotted,thick,transform shape]

\usepackage{graphicx}
\usepackage{cancel}
\usepackage{newfloat}

\newcommand{\makecal}[2]{\newcommand{#1}{\mathcal{#2}}}
\makecal{\calc}{C}
\makecal{\cale}{E}
\makecal{\calh}{H}
\def\H{{\cal H}}
\makecal{\cali}{I}
\makecal{\cala}{A}
% \makecal{\cally}{L}
\makecal{\lvs}{L}
\makecal{\caln}{N}
\makecal{\calo}{O}
\makecal{\calot}{OT}
\makecal{\caldt}{DT}

\makecal{\calp}{P}
\makecal{\calq}{Q}
\makecal{\cals}{S}
\makecal{\calt}{T}
\makecal{\calv}{V}
\makecal{\calcl}{L}

%_{\mathit{drama}}}}%Hst}}
%\mathit{dramaPrj}}

\newcommand{\ie}{i.e.,\xspace}
\newcommand{\eg}{e.g.,\xspace}

\newcommand{\wrt}{w.r.t.\xspace}
\renewcommand{\iff}{\textit{iff}\xspace}
\newcommand{\TTP}{\texttt{$\text{t}_2\text{DTP()}$}\xspace}

\newcommand{\figref}[1]{Fig.~\ref{#1}}

\newcommand{\reali}{\ensuremath{\mathbb{R}}}
\newcommand{\RR}{\ensuremath{\mathbb{R}}}
\newcommand{\interi}{\ensuremath{\mathbb{Z}}}

\newcommand{\T}{\mathcal{T}}
\newcommand{\F}{\mathcal{F}}
\newcommand{\C}{\mathcal{C}}

\newcommand{\A}{\mathcal{A}}

\newcommand{\N}{\mathcal{N}}

\newcommand{\NP}{\text{NP}}
\newcommand{\coNP}{\text{co-NP}}

 % temporal exection strategy
 % decision strategy
 % execution strategy

 % condition scenario
 % decision scenario
 % full scenario

\newcommand{\STN}{STN\xspace}
\newcommand{\STP}{STP\xspace}
\newcommand{\RDTN}{RDTN\xspace}
\newcommand{\DTP}{DTP\xspace}
\newcommand{\RDTP}{RDTP\xspace}
\newcommand{\RDTPc}{\texttt{RDTP()}\xspace}
\newcommand{\TTHTP}{$\texttt{t}_2\texttt{HyTP()}$\xspace}
\newcommand{\HTN}{HyTN\xspace}

\newcommand{\HHTP}{\textsc{head-HyTP}\xspace}
\newcommand{\TTTHTNC}{\textsc{head-$\TTHREE$HyTP}\xspace}
\newcommand{\TTTTTNC}{\textsc{tail-$\TTHREE$HyTP}\xspace}
\newcommand{\TTTtwoHTNC}{\textsc{head-$\TTWO$HyTP}\xspace}
\newcommand{\TTTtwoTTNC}{\textsc{tail-$\TTWO$HyTP}\xspace}

\tikzstyle{guardRange}=[above,scale=.8,outer sep=0pt,inner sep=1.5pt,xshift=-2pt,transform shape,minimum width=\mingnat]
\newlength{\gnat}
\newlength{\mingnat}
\setlength{\mingnat}{14pt}
\newcommand{\setgnat}[1]{%
	\settowidth{\gnat}{\scalebox{1}{\normalsize\ensuremath{#1}}}%
	\ifthenelse{\lengthtest{\gnat<\mingnat}}{\setlength{\gnat}{\mingnat}}{}%
}

\begin{document}

\maketitle

\begin{abstract}
In 2005 T.K.S. Kumar studied the Restricted Disjunctive Temporal Problem (RDTP),
	a restricted but very expressive class of Disjunctive Temporal Problems (DTPs).
An RDTP comes with a finite set of temporal variables, and a finite set of temporal constraints
 each of which can be either one of the following three types:
($\TONE$) two-variable linear-difference simple constraint;
($\TTWO$) single-variable disjunction of many interval constraints;
($\TTHREE$) two-variable disjunction of two interval constraints only.
Kumar showed that RDTPs are solvable in deterministic strongly polynomial time by reducing them to the Connected Row-Convex (CRC) constraints satisfaction problem,
also devising a faster randomized algorithm. Instead,
the most general form of DTPs allows for multi-variable disjunctions of many interval constraints and it is \textsc{NP}-complete.

This work offers a deeper comprehension on the tractability of RDTPs, leading to an elementary deterministic strongly polynomial time algorithm for them,
significantly improving the asymptotic running times of all the previous deterministic and randomized solutions.
The result is obtained by reducing RDTPs to the Single-Source Shortest Paths (SSSP) and the 2-SAT problem (jointly), instead of reducing to CRCs.
In passing, we obtain a faster (quadratic time) algorithm for RDTPs having only $\{\TONE, \TTWO\}$-constraints and no $\TTHREE$-constraint.
As a second main contribution, we study the tractability frontier of solving RDTPs blended with Hyper Temporal Networks (\HTN{s}),
a disjunctive strict generalization of Simple Temporal Networks (\STN{s}) based on hypergraphs:
	we prove that solving temporal problems having only $\TTWO$-constraints and either only multi-tail or only multi-head hyperarc-constraints
lies in $\textsc{NP}\cap\textsc{\text{co-}NP}$ and admits deterministic pseudo-polynomial time algorithms;
on the other hand, problems having only $\TTHREE$-constraints and either only multi-tail or only multi-head hyperarc-constraints~turns~out~strongly~\textsc{NP}-complete.
\end{abstract}

\section{Introduction}\label{sect.introduction}
Expressive and efficient temporal reasoning is essential to a number of areas in Artificial Intelligence~(AI)~\cite{KOUBARAKIS2006,PANI200155,Schwalb1998}.
Over the past few years, many constraint-based formalisms have been developed to represent
and reason about time in automated planning and temporal scheduling~\cite{Dechter2003,Nau2004}.
We begin by recalling the Disjunctive Temporal Problem (DTP)~\cite{Oddi2000,STERGIOU2008,TSAMARDINOS2003}.
The general form of a DTP being, given a finite set $\T=\{X_0,X_1,\ldots, X_N\}$ of temporal variables (\ie time-points),
to schedule them on the real line in such a way as to satisfy a prescribed finite set $\C$ of temporal constraints over $\T$.
Every constraint $c_i\in\C$ is a disjunction of the form $s_{(i,1)}\vee s_{(i,2)}\vee \cdots \vee s_{(i,T_i)}$,
where every $s_{i,j}$ is a simple temporal constraint of the form
$(l_{i,j}\leq X_{\beta_{i,j}} - X_{\alpha_{i,j}} \leq u_{i,j})$ for some integers $0\leq \alpha_{i,j}, \beta_{i,j} \leq N$ and reals $l_{i,j}, u_{i,j}$.

Although DTPs are expressive enough to capture many tasks in automated planning and temporal scheduling, they are \textsc{NP}-complete~\cite{STERGIOU2008}.
The principal direct approach taken to solve DTPs has been to convert the original
problem into one of selecting a disjunct from each constraint~\cite{STERGIOU2008,TSAMARDINOS2003},
then to check whether the set of selected disjuncts forms a consistent Simple Temporal Problem (STP)~\cite{DechterMP91}.
This can be done in strongly polynomial time by computing single-source shortest paths (\eg with the Bellman-Ford's algorithm~\cite{Bellman58}).
Under this prospect, of course the prohibitive complexity of solving DTPs comes from the fact that there are exponentially many disjunct combinations~possible.

In~\cite{Kumar2006,Kumar05}, T.K.S. Kumar studied the Restricted Disjunctive Temporal Problem (RDTP),
  a tractable subclass of DTPs strictly including the classical and well established STPs~\cite{DechterMP91}.
In RDTPs, each constraint can be either one of the following three types:
($\TONE$) $(Y-X\leq w)$, for $w$ real (a simple temporal difference-constraint);
($\TTWO$) $(l_1\leq X\leq u_1)\vee \cdots \vee (l_k\leq X\leq u_k)$, for $l_i,u_i$ reals (a single-variable disjunction of many interval-constraints);
($\TTHREE$) $(l_1\leq X\leq u_1) \vee (l_2\leq Y\leq u_2)$, for $l_i,u_i$ reals (a two-variable disjunction of two interval-constraints).

It was shown in~\cite{Kumar05} that RDTPs are solvable in deterministic strongly polynomial
time by reducing them to the Connected Row-Convex (CRC)~\cite{DEVILLE1999} constraint satisfaction problem,
  faster randomized algorithms were also proposed.
CRC constraints generalize many other known tractable classes of constraints like 2-SAT,
  implicational, and binary integer-weighted linear constraints~\cite{DEVILLE1999}.
Particularly, Kumar's deterministic algorithm for solving RDTPs works
by reducing them into binary Constraint Satisfiability Problems (CSPs) over meta-variables representing
$\TTWO$ or $\TTHREE$ constraints, meanwhile showing that such binary constraints are indeed CRC constraints,
finally exploiting the algorithmic tractability of CRC constraints.

An instantiation of a consistency checking algorithm (\eg~\cite{DEVILLE1999}) that further exploits the structure of CRC constraints
leads to a time complexity of $O\big((|\C_{\TTWO}| + |\C_{\TTHREE}|)^3\cdot d^2_{\max} + |\T|\cdot |\C_{\TONE}|\cdot (|\C_{\TTWO}| + |\C_{\TTHREE}|)^2\big)$,
where $\C_{\TONE, \TTWO, \TTHREE}$ is the set of ${\TONE},{\TTWO},{\TTHREE}$ constraints (respectively),
and $d_{\max}$ is the maximum number of disjuncts possible per single constraint~\cite{Kumar05}.
Randomization reduces the running time to
$O\big((|\C_{\TTWO}|+|\C_{\TTHREE}|)^2 \cdot d_{\max}^2 \cdot \delta + |\T|\cdot |\C_{\TONE}|\cdot (|\C_{\TTWO}| + |\C_{\TTHREE}|)^2\big)$,
where $\delta$ is the degree of the CRC network (\ie the maximum number of constraints any variable participates into)~\cite{Kumar05}.

Notable applications of RDTPs include solving STP{s} with Taboo Regions, cfr.~\cite{Kumar2013}.

\textit{Contributions.} This work offers a deeper comprehension on the tractability of RDTPs,
leading to elementary deterministic strongly polynomial time algorithms,
significantly improving the asymptotic running times of both the Kumar's deterministic and randomized solutions.
Our time complexity is $O\big(|\T|\cdot |\C_{\TONE}| + |\C_{\TTWO}|\cdot (|\C_{\TONE}| + |\T|\cdot \log |\T|) + |\T|\cdot d_{\C_{\TTWO}}\cdot|\C_{\TTHREE}| + |\C_{\TTHREE}|^2\big)$,
where $d_{\C_{\TTWO}}$ is the total number of disjuncts counting over all $\TTWO$-constraints.
Since $d_{\C_{\TTWO}}\leq d_{\max}\cdot |\C_{\TTWO}|$, this improves over all of the previous solutions.
The result is obtained by reducing RDTPs to the Single-Source Shortest Paths (SSSP) and the 2-SAT problem (jointly), instead of reducing to CRCs.
So the full expressive power of CRCs is not needed, binary linear and 2-SAT constraints are enough.
In passing, we obtain a faster (quadratic time) deterministic algorithm for solving temporal problems
having only $\{\TONE, \TTWO\}$-constraints and no ${\TTHREE}$-constraint.

As a second main contribution, we study the tractability frontier of RDTPs widened
with another kind of restricted disjunctive constraints, \ie Hyper Temporal Networks (\HTN{s})~\cite{CominPR17},
 a strict generalization of \STN{s} grounded on directed hypergraphs and introduced to overcome the limitation of considering only conjunctions of
constraints but maintaining a practical efficiency in the consistency check of the instances.
In a \HTN a single temporal multi-tail (or multi-head) hyperarc-constraint is defined as a set of two or more maximum delay (minimum anticipation, respectively)
constraints which is satisfied when at least one of these delay constraints is so.
We prove that solving temporal problems having only ${\TTWO}$-constraints and either only multi-tail or only multi-head hyperarc-constraints
 lies in $\textsc{NP}\cap\textsc{\text{co-}NP}$ and admits deterministic pseudo-polynomial time algorithms;
on the other hand, solving temporal problems having only ${\TTHREE}$-constraints and either only multi-tail or
only multi-head hyperarc-constraints turns out strongly \textsc{NP}-complete. See Table~1 below for a summary.
\begin{center}
  \begin{tabular}{ | c | c | c | c |}
    \textit{Problem} & \textit{Complexity} & \textit{Improved Time Bound} & \textit{Cfr.} \\ \hline
    $\TTWO$DTPs & \textsc{P} & $O\big(|\T|\cdot |\C_{\TONE}| + |\C_{\TTWO}|\cdot (|\C_{\TONE}|
      + |\T|\cdot \log |\T|) + |\T|\cdot d_{\C_{\TTWO}}\big)$ & Sect.~3 \\ \hline
    RDTPs & \textsc{P} & \begin{tabular}{@{}c@{}}$O\big(|\T|\cdot |\C_{\TONE}| + |\C_{\TTWO}|\cdot (|\C_{\TONE}| + |\T|\cdot \log |\T|)\, +$  \\
      $+\,|\T|\cdot d_{\C_{\TTWO}}\cdot|\C_{\TTHREE}| + |\C_{\TTHREE}|^2\big)$ \end{tabular} & Sect.~4 \\ \hline
    $\TTWO$HyTPs & $\textsc{NP}\cap \text{co\,-\textsc{NP}}$ & $O\big((|\T|+|\A|)\cdot m_{\A}\cdot W_{\A,\C_{\TTWO}}\big)$ & Sect.~6 \\ \hline
    $\TTHREE$HyTPs  &  $\textsc{NP}\text{-complete}$ & \emph{n.a. (exponential time)} & Sect.~5 \\ \hline
  \end{tabular}\label{tab:results}
\end{center}
\begin{center}
\colorbox{colorbox}{\rule{0pt}{3pt}\rule{3pt}{0pt}}\; \footnotesize\textbf{Table~\ref{tab:results}} Summary of main results. \normalsize
\end{center}

\section{Background}\label{sect.Background}
This section offers the basic background notions that are assumed in the rest of the paper,
let's start with Simple Temporal Networks (\STN{s}) and related problems (STP{s}), cfr.~\cite{Dechter2003,DechterMP91}.
\begin{mydefinition}[\STN{s}, \STP{s}~\cite{Dechter2003,DechterMP91}]
A {\em Simple Temporal Network} (\STN) is a pair $(\T,\C)$, where $\T$ is a set of
real-valued variables called {\em time-points,} and $\C$ is a set of linear real-weighted binary constraints over $\T$
  called {\em simple (or $\TONE$)} temporal constraints, each having the form:
\[(Y-X\leq w_{X,Y}),\text{ where } X,Y\in \T \text{ and } w_{X,Y}\in\reali.\]

An \STN is \textit{consistent} if it admits a \emph{feasible schedule},
\ie some $s: \T\mapsto \reali$ such that $s(Y) \leq s(X) + w_{X,Y}$ for all $(Y-X\leq w_{X,Y})\in \C$.
So the {\em Simple Temporal Problem} (\STP) is that of determining whether a given \STN is consistent or not.
\end{mydefinition}

Any \STN $\N=(\T,\C)$ can be seen as a directed weighted graph with vertices $\T$ and arc set $A_\C\triangleq \{(X,Y, w_{X,Y})\mid (Y-X\leq w_{X,Y})\in \C\}$.
So, a \emph{path} $p$ in $\N$ is any finite sequence of vertices $p=(v_0,v_1,\ldots, v_k)$ (some $k\geq 1$)
such that $(v_i, v_{i+1}) \in A_\C$ for every $i\in [0,k)\cap\interi$;
the total weight of $p$ is then $w_p\triangleq \sum_{i=0}^{k-1} w_{v_i, v_{i+1}}$.
A \emph{cycle} $C$ in $\N$ is any set of arcs $C\subseteq A_\C$ cyclically sequenced as $a_0, a_1, \ldots a_{\ell-1}$
where an head equals a tail, \ie $h(a_i) = t(a_j)$, iff $j=i+1 \mod \ell$; it is called a \emph{negative cycle} if $w(C) \leq 0$,
where $w(C)$ stands for $\sum_{a\in C} w_a$. A graph is called \textit{conservative} when it contains no negative cycle.
A \textit{schedule} is any function $f: \T \mapsto \reali$.
So the \textit{reduced weight} of an arc $a = (t,h,w_a)$ with respect to a schedule $f$ is defined as $w^{f}_a \triangleq w_a - f(h) + f(t)$.
A schedule $f$ is \textit{feasible} iff $w^{f}_a\geq 0$ for every $a\in A_\C$.

It is also worth noticing that, given two feasible schedules $s_1,s_2$ of any \STN,
the pointwise-minimum schedule, $s(u)\triangleq \min(s_1(u), s_2(u))$ $\forall u\in\T,$ is also feasible.
Indeed, among all of the possible feasible schedules of a given consistent \STN, it is natural to consider the \emph{least} feasible one;
  \ie $\hat{s}:\T\rightarrow\reali_{\geq 0}$ is the \emph{least feasible schedule} of \STN $\N$ if $\hat{s}$ is feasible for $\N$ and,
for any other non-negative feasible schedule $s'\geq 0$ of $\N$, it holds $\hat{s}(u)\leq s'(u)$ $\forall u\in \T$.

Remarkably, finding the least feasible schedule of an \STN takes polynomial time~\cite{Dechter2003}.
\begin{mytheorem}[\cite{Dechter2003}]\label{thm:main_stn}
Let $\N=(\T,\C)$ be an \STN. The \emph{Bellman-Ford} (BF) algorithm (cfr.~\cite{Bellman58}) produces in $O(|\T|\cdot |\C|)$ time:
either the least feasible schedule $\hat{s} :\T\rightarrow\reali_{\geq 0}$, in case $\N$ is consistent;
or a certificate that $\N$ is inconsistent in the form of a negative cycle.
Moreover, if the weights of the arcs are all integers, then the scheduling values of $\hat{s} $ are all integers too.
\end{mytheorem}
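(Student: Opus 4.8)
The plan is to recast the \STN $\N=(\T,\C)$ as its distance graph on the arc set $A_\C$ and to reduce the whole statement to the classical theory of single-source shortest paths, exploiting the reduced-weight characterization of feasibility recalled just above. The first observation is that a schedule $f$ is feasible precisely when it is a \emph{feasible potential} of $A_\C$: for $a=(X,Y,w_{X,Y})$ the condition $w^f_a\ge 0$ unfolds to $f(Y)-f(X)\le w_{X,Y}$, which is exactly the defining inequality of $\C$. This lets me read consistency as a purely combinatorial property. For the inconsistency certificate I would use the telescoping identity: given any cycle $C$, summing the inequalities $f(h(a))-f(t(a))\le w_a$ over $a\in C$ cancels all $f$-terms (each vertex occurs once as a head and once as a tail) and leaves $0\le w(C)$; hence any cycle violating this, i.e.\ a negative cycle, cannot coexist with a feasible schedule and is a sound certificate of inconsistency. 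Conversely, I must show that whenever $\N$ is conservative a feasible potential exists and is exactly what BF exhibits.

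The heart of the argument is to produce, in the conservative case, not merely some feasible schedule but the \emph{least} non-negative one. I would introduce a virtual sink $q$ together with a $0$-weight arc $u\to q$ for every $u\in\T$ (adding no new cycle, so conservativeness is preserved) and define $\hat s(u)\triangleq -\mathrm{dist}(u,q)$, the negated single-destination shortest-path distances, computed by BF on the reverse graph. Non-negativity is immediate, since the direct arc gives $\mathrm{dist}(u,q)\le 0$; feasibility follows from the triangle inequality $\mathrm{dist}(X,q)\le w_{X,Y}+\mathrm{dist}(Y,q)$ along $X\to Y\to q$, which rearranges to $\hat s(Y)-\hat s(X)\le w_{X,Y}$. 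The delicate point, which I expect to be the main obstacle, is \emph{minimality}. I would prove $\hat s(u)\le s'(u)$ for every non-negative feasible $s'$ by telescoping along a shortest $u$-to-$q$ path $u=x_0\to\cdots\to x_k\to q$: feasibility of $s'$ gives $s'(x_k)-s'(u)\le\sum_i w_{x_i,x_{i+1}}=\mathrm{dist}(u,q)$, whence $s'(u)\ge s'(x_k)-\mathrm{dist}(u,q)\ge -\mathrm{dist}(u,q)=\hat s(u)$, the last step using $s'(x_k)\ge 0$. This pins every non-negative feasible schedule pointwise from below by $\hat s$; in particular the least element is unique, consistently with the pointwise-minimum closure property noted earlier.

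Finally I would account for the running time and integrality. Rather than materialising the sink $q$ and its $|\T|$ arcs, I would use the equivalent initialise-to-$0$ convention, keeping the vertex set at $|\T|$ and the arc set at $|\C|$; the $|\T|-1$ rounds of edge relaxations then cost $O(|\T|\cdot|\C|)$, and one additional relaxation round either confirms optimality or exposes a still-improving arc from which the negative cycle is traced back as the certificate. Integrality is then immediate: every distance update is a single addition and comparison of input weights, so if all $w_{X,Y}\in\interi$ then each $\mathrm{dist}(u,q)$, and hence each $\hat s(u)$, is an integer. Together these three steps give the least non-negative feasible schedule or a negative-cycle certificate in $O(|\T|\cdot|\C|)$ time, with integer values on integer input.
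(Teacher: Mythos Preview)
The paper does not supply a proof of this theorem: it is stated as a cited result from \cite{Dechter2003,Bellman58} and used as background. Your write-up is a correct, self-contained rendering of the standard Bellman--Ford argument (feasible schedules $=$ potentials, telescoping to certify negative cycles, auxiliary sink plus negated distances to obtain the least non-negative potential, integrality from closure under addition), so there is nothing to compare against and nothing to fix.
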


Concerning the BF algorithm itself, it's worth considering an improved variant of it that we call the \emph{Bellman-Ford Value-Iteration}~(BF-VI).
The basic idea of BF-VI is the same as the original BF algorithm in that each vertex is used as a candidate to relax its adjacent vertices.
The improvement is that instead of trying all vertices blindly, BF-VI maintains a queue $Q$ of candidate vertices
 and adds a vertex to $Q$ only if that vertex is relaxed.
A candidate vertex $v$ is extracted from $Q$ according to a fixed policy (\eg LIFO), and then the
  adjacent vertices of $v$ are possibly relaxed as usual and added to $Q$ (if they are not already in there, no repetitions are allowed).
This process repeats until no more vertex can be relaxed.

BF-VI serves us as a basic model, to be leveraged to design faster algorithms for RDTPs.

\subsection{Restricted Disjunctive Temporal Problems}
\begin{figure}
\begin{minipage}[t]{0.45\textwidth}
\begin{tikzpicture}[scale=0.75]
\begin{axis}[axis lines=middle, xtick={0,...,9}, y=16cm, ymajorticks=false, axis equal,grid=both, xlabel=$X_i$, every axis x label/.style={
    at={(ticklabel* cs:1)},
    anchor=west},
	every tick/.style={
        black,
        thick
      }]
\addplot[color=black, mark=] coordinates{(0,1) (0,-1)};
\addplot[color=black, mark=] coordinates{(1,1) (1,-1)};
\addplot [color=black,mark=,fill=gray,
                    fill opacity=0.5] coordinates {
            (0, 1)
            (1, 1)
            (1, -1)
	    (0,-1)
};
\addplot[color=black, mark=] coordinates{(2,1) (2,-1)};
\addplot[color=black, mark=] coordinates{(3,1) (3,-1)};
\addplot [color=black,mark=,fill=gray,
                    fill opacity=0.5] coordinates {
            (2, 1)
            (3, 1)
            (3, -1)
	    (2,-1)
};
\addplot[color=black, mark=] coordinates{(5,1) (5,-1)};
\addplot[color=black, mark=] coordinates{(7,1) (7,-1)};
\addplot [color=black,mark=,fill=gray,
                    fill opacity=0.5] coordinates {
            (5, 1)
            (7, 1)
            (7, -1)
	    (5,-1)
};
\addplot[color=black, mark=] coordinates{(8,1) (8,-1)};
\addplot[color=black, mark=] coordinates{(9,1) (9,-1)};
\addplot [color=black,mark=,fill=gray,
                    fill opacity=0.5] coordinates {
            (8, 1)
            (9, 1)
            (9, -1)
	    (8,-1)
};
\end{axis}
\end{tikzpicture}\caption{An example of a $\TTWO$-constraint:
	$(0\leq X_i\leq 1)\vee (2\leq X_i\leq 3)\vee (5\leq X_i\leq 7)\vee (8\leq X_i\leq 9)$.}\label{fig:type2}
\end{minipage}
\qquad
\begin{minipage}[t]{0.45\textwidth}
\begin{tikzpicture}[scale=0.75]
\begin{axis}[axis lines=middle, y=15cm, x=15cm, xtick={0,...,3}, axis equal,grid=both, xlabel=$X_i$, ylabel=$X_j$, every axis x label/.style={
    at={(ticklabel* cs:1)},
    anchor=west},
	every tick/.style={
        black,
        thick
      },
every axis y label/.style={
    at={(ticklabel* cs:1.05)},
    anchor=north west},
	every tick/.style={
        black,
        thick
      }
]
\addplot[color=black, mark=] coordinates{(2,3) (2,0)};
\addplot[color=black, mark=] coordinates{(3,3) (3,0)};
\addplot [color=black,mark=,fill=gray,
                    fill opacity=0.5] coordinates {
            (2, 3)
            (3, 3)
            (3, 0)
	    (2,0)
};
\addplot[color=black, mark=] coordinates{(0,1) (4,1)};
\addplot[color=black, mark=] coordinates{(0,2) (4,2)};
\addplot [color=black,mark=,fill=gray,
                    fill opacity=0.5] coordinates {
            (0, 1)
            (0, 2)
            (2, 2)
	    (2,1)
	    (3, 1)
            (3, 2)
            (4, 2)
	    (4,1)

};
\end{axis}
\end{tikzpicture}\caption{An example of a $\TTHREE$-constraint:
	$(2\leq X_i\leq 3)\vee (1\leq X_j\leq 2)$.}\label{fig:type3}
\end{minipage}
\end{figure}

Let us proceed by formally defining \RDTN{s} and \RDTP{s}.
\figref{fig:type2} and \figref{fig:type3} (above) illustrate an example of a $\texttt{t}_2$-constraint and $\texttt{t}_3$-constraint (respectively).
\begin{mydefinition}[\RDTN{s}, \RDTP{s}~\cite{Kumar2006,Kumar05}]
A {\em Restricted Disjunctive Temporal Network}~(\RDTN)~$\N$ is a pair $(\T,\C)$,
	where $\T$ is a set of time-points and $\C = \C_{\TONE}\cup\C_{\texttt{t}_2}\cup\C_{\texttt{t}_3}$ is a set of
{\em restricted disjunctive temporal constraints} over $\T$, each being either one of the following three types:
\begin{enumerate}
\item[($\TONE$)]: $(Y-X\leq w_{X,Y})$, where $X,Y\in \T$ and $w_{X,Y}\in\reali$;
\item[($\texttt{t}_2$)]: $\bigvee_{i=1}^{k}(l_i\leq X\leq u_i)$, where $X\in \T$ and $l_i,u_i\in \reali$ for every $i=1, \ldots, k$;
\item[($\texttt{t}_3$)]: $(l_1\leq X\leq u_1) \vee (l_2\leq Y\leq u_2)$, where $X,Y\in \T$ and $l_i,u_i\in \reali$ for $i=1,2$.
\end{enumerate}
An \RDTN is \textit{consistent} if it admits a \emph{feasible schedule},
\ie some $s: \T\mapsto \reali$ satisfying all of the disjunctive temporal constraints in $\C$.
The {\em Restricted Disjunctive Temporal Problem} (\RDTP) is that of determining whether a given \RDTN is consistent or not.
\end{mydefinition}

Notice that $\TONE$-constraints do coincide with simple temporal constraints of \STN{s}.

We assume w.l.o.g. that the disjuncts of any $\TTWO$-constraint are arranged in ascending order of the end points of their corresponding intervals,
\ie $l_i<l_{i+1}\wedge u_i<u_{i+1}$ $\forall i$, whenever $\bigvee_{i=1}^{k}(l_i\leq X\leq u_i)\in\C_{\texttt{t}_2}$;
these natural orderings on the interval domains of the time-points will be referred to as as their \emph{nominal ordering}.
For any $\tau\in\{1,2,3\}$, $|\C_{\texttt{t}_\tau}|$ denotes the number of $\texttt{t}_\tau$-constraints
  (\ie the cardinality of $\C_{\texttt{t}_\tau}$, not the encoding length).
Also, for any $c_X\in \C_{\texttt{t}_2}$, $|c_X|$ denotes the number of disjuncts of $c_X$,
and $d_{\C_{\texttt{t}_2}}\triangleq \sum_{c_X\in\C_{\texttt{t}_2}} |c_X|$.
Finally, let us fix a total ordering on the time-points, \ie $\T=\{T_1, \ldots, T_k\}$,
this induces an ordering on the pair of disjuncts in any $\TTHREE$-constraint; so, provided,
$c=(l_1\leq X_i\leq u_1)\vee (l_2\leq X_j\leq u_2)\in\C_{\texttt{t}_3}$, for some $i<j$, then,
$d'\triangleq (l_1\leq X_i\leq u_1)$ and $d''\triangleq (l_2\leq X_j\leq u_2)$ are
called the \emph{first} and the \emph{second} disjunct of $c$ (respectively).

As mentioned in the introduction, Kumar showed in~\cite{Kumar05, Kumar2006} that RDTPs are solvable in deterministic
strongly polynomial time by reducing them to CRCs~\cite{DEVILLE1999}.

\subsection{Hyper Temporal Networks}
In order to study the tractability frontier of \RDTP{s}, we shall consider the \HTN model which is grounded on directed hypergraphs as defined next.
\begin{mydefinition}[\cite{CominPR17}]
A \emph{directed hypergraph} $\H$ is a pair $(\T,\A)$, where $\T$ is the set of nodes, and $\A$ is the set of \emph{hyperarcs}.
Each hyperarc $A\in\A$ is either \emph{multi-head} or \emph{multi-tail}:
\begin{itemize}
\item A \emph{multi-head} hyperarc $A=(t_A, H_A, w_A)$ has a distinguished node $t_A$, called the \emph{tail} of $A$, and a non-empty set
$H_A\subseteq V\setminus\{t_A\}$ containing the \emph{heads} of $A$;
to each head $v\in H_A$, it is associated a \emph{weight} $w_A(v)\in\RR$, which is a real number (unless otherwise specified).
\figref{fig:head_hyperarc} depicts a possible representation of a multi-head hyperarc:
the tail is connected to each head by a dashed arc labeled by the name of the hyperarc and the weight associated to the considered head.

\item A \textit{multi-tail} hyperarc $A=(T_A, h_A, w_A)$ has a distinguished node $h_A$, called the \emph{head} of $A$, and a non-empty set
$T_A\subseteq V\setminus\{h_A\}$ containing the \emph{tails} of $A$;
to each tail $v\in T_A$, it is associated a \emph{weight} $w_A(v)\in\RR$, which is a real number (unless otherwise specified).
\figref{fig:tail_hyperarc} depicts a possible representation of a multi-tail hyperarc:
the head is connected to each tail by a dotted arc labeled by the name of the hyperarc and weights.
\end{itemize}
\end{mydefinition}
\begin{figure}[!h]
\centering
\begin{minipage}[]{.35\linewidth}
\begin{tikzpicture}[arrows=->,scale=.7,node distance=.5 and 2]
    \node[node,xshift=1ex,label={above, yshift=.5ex:$H_A$}] (v1) {$v_1$};
    \node[node,below=of v1] (v2) {$v_2$};
    \node[node,below=of v2] (v3) {$v_3$};
	\node[node,left=of v2] (u) {$t_A$};
% 	\node[smallLabel,anchor=north east] at (v3.south -| u) {a) Hyperarc $a$.};
   	\draw[>=stealth, multiHead] (u) to node[timeLabel,above,sloped] {$A, w_A(v_1)$} (v1);%
   	\draw[>=stealth, multiHead] (u) to node[timeLabel,above,sloped] {$A, w_A(v_2)$} (v2);%
   	\draw[>=stealth, multiHead] (u) to node[timeLabel,above,sloped] {$A, w_A(v_3)$} (v3);%
 %%%%%%%%%%%%%
  \draw[dashed, ultra thin, rounded corners=15pt] (-.55,1.2) rectangle (.8,-2.75);
\end{tikzpicture}\subcaption{Multi-Head Hyperarc}\label{fig:head_hyperarc}
\end{minipage}
\begin{minipage}[]{.35\linewidth}
\begin{tikzpicture}[arrows=->, scale=.7,node distance=.5 and 2]
    \node[node,label={above, yshift=.5ex:$T_A$}] (v1) {$v_1$};
    \node[node,below=of v1] (v2) {$v_2$};
    \node[node,below=of v2] (v3) {$v_3$};
	\node[node,right=of v2] (u) {$h_A$};
% 	\node[smallLabel,anchor=north east] at (v3.south -| u) {a) Hyperarc $a$.};
   	\draw[>=stealth,multiTail] (v1) to node[timeLabel,above,sloped] {$A, w_A(v_1)$} (u.north west);%
   	\draw[>=stealth, multiTail] (v2) to node[timeLabel,above,sloped] {$A, w_A(v_2)$} (u);%
   	\draw[>=stealth,multiTail] (v3) to node[timeLabel,above,sloped] {$A, w_A(v_3)$} (u.south west);%
    %%%%%%%%%
    \draw[dashed, ultra thin, rounded corners=15pt] (-.65,1.2) rectangle (.725,-2.75);
\end{tikzpicture}\subcaption{Multi-Tail Hyperarc}\label{fig:tail_hyperarc}
\end{minipage}
\caption{Hyperarcs in Hyper Temporal Networks.}\label{fig:hyperarcs}
\end{figure}

The \emph{cardinality} of a hyperarc $A\in \A$ is $|A|\triangleq  |H_A\cup \{t_A\}|$ if $A$ is multi-head,
and $|A| \triangleq |T_A\cup\{h_A\}|$ if $A$ is multi-tail; when $|A|=2$, then $A=(u, v, w)$ is a standard arc.
The \textit{order} and \textit{size} of a directed hypergraph $(\T,\A)$
are $|\T|$ and $m_\A \triangleq \sum_{A\in \A} |A|$~(respectively).

\begin{mydefinition}[\GHTN~\cite{CominPR17}]
A \emph{general-\HTN} is a directed hypergraph $\H = (\T,\A)$ where each node $X\in \T$ represents a time-point,
and each multi-head\slash multi-tail hyperarc stands for a set of temporal
distance constraints between the tail\slash head and the heads\slash tails.
\end{mydefinition}

In general-\HTN{s}, an hyperarc is \textit{satisfied} when at least one of its distance constraints is satisfied.
Then, a \HTN{} is \textit{consistent}
when it is possible to assign a value to each time-point so that all of its hyperarcs are satisfied.
More formally, in the \HTN model the consistency-checking problem is the following decision problem.
\newcommand{\savefootnote}[2]{\footnote{\label{#1}#2}}
\newcommand{\repeatfootnote}[1]{\textsuperscript{\ref{#1}}}
\begin{mydefinition}[\GHTP~\cite{CominPR17}]
Given a general-\HTN \mbox{$\H=(\T,\A)$}, the \emph{General Hyper Temporal Problem} (\GHTP) is
that of deciding whether or not there exists a schedule \mbox{$s:\T \rightarrow \RR$}
such that, for every hyperarc $A\in\A$, the following hold:
\begin{itemize}
\item if $A=(t,h,w)$ is a standard arc, then: $s(h)-s(t)\leq w$;
\item if $A=(t_A, H_A, w_A)$ is a multi-head hyperarc, then:
	$s(t_A) \geq \min_{v\in H_A} \{s(v) - w_A(v) \}$;
\item if $A=(T_A, h_A, w_A)$ is a multi-tail hyperarc, then:
$s(h_A) \leq \max_{v\in T_A} \{s(v) + w_A(v) \}$.
\end{itemize}
\end{mydefinition}

Any such schedule $s$ is called \textit{feasible}.
  A \HTN that admits at least one feasible schedule is called \textit{consistent}.

Comparing the consistency of \HTN{s} with the consistency of \STN{s},
the most important aspect of novelty is that, while in a distance graph of \STN{s} each arc
represents a distance constraint and all such constraints have to be satisfied by any feasible schedule,
in a \HTN each hyperarc represents a disjunction of one or more distance constraints
and a feasible schedule has to satisfy at least one of such distance constraints for each hyperarc.

Let us survey some interesting properties about the consistency-checking problem above.
The first one is that any integer-weighted \HTN admits an integer-valued
  feasible schedule when it is consistent, as stated in the following proposition.
\begin{proposition}[\cite{CominPR17}]\label{prop:int_sched}
Let $\H=(\T,\A)$ be an integer-weighted\footnote{Integer-weighted $\HTN$
  means that $w_A(v)\in\mathbb{Z}$ for every $A\in\A$ and $v\in\T$ for which $w_A(v)$ is defined.} and consistent general-$\HTN$.
Then $\H$ admits an integer feasible schedule $s:\T \rightarrow \{-T,-T+1, \ldots, T-1, T\}$, where $T = \sum_{A\in\A, v\in \T} |w_A(v)|$.
\end{proposition}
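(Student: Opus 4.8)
The plan is to reduce the consistency of the general-\HTN\ to that of an ordinary \STN\ obtained by selecting, from each hyperarc, a single distance constraint witnessed by a fixed feasible schedule, and then to invoke the integrality and boundedness of the least feasible schedule of that \STN.

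First I would fix any real feasible schedule $s$ of $\H$, which exists because $\H$ is consistent. For each hyperarc $A\in\A$ I would extract one simple temporal constraint that $s$ already satisfies: if $A=(t_A,H_A,w_A)$ is multi-head, then $s(t_A)\ge \min_{v\in H_A}\{s(v)-w_A(v)\}$ is witnessed by the minimizing head $v^\ast$, yielding the $\TONE$-constraint $s(v^\ast)-s(t_A)\le w_A(v^\ast)$; symmetrically, if $A=(T_A,h_A,w_A)$ is multi-tail, the maximizing tail $v^\ast$ yields $s(h_A)-s(v^\ast)\le w_A(v^\ast)$; a standard arc is kept as is. Collecting these $|\A|$ selected constraints defines an \STN\ $\N'=(\T,\C')$ for which $s$ is, by construction, a feasible schedule; hence $\N'$ is consistent. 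Moreover every weight appearing in $\C'$ is an integer weight $w_A(v)$ of $\H$, and each hyperarc contributes at most one such weight, so $\sum_{a\in A_{\C'}}|w_a|\le \sum_{A\in\A,\,v\in\T}|w_A(v)| = T$.

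Next I would apply Theorem~\ref{thm:main_stn} to $\N'$: Bellman--Ford returns the least feasible schedule $\hat s:\T\to\reali_{\ge 0}$, and since all weights of $\N'$ are integers, $\hat s$ is integer-valued. Because $\hat s$ satisfies every selected constraint, and each such constraint is one disjunct of its originating hyperarc, $\hat s$ satisfies every hyperarc of $\H$ as well; thus $\hat s$ is already an integer feasible schedule of $\H$.

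The remaining --- and only delicate --- point is the range bound $\hat s(\T)\subseteq\{-T,\dots,T\}$. Here I would use the path characterization of the least feasible schedule: rewriting each constraint $s(Y)-s(X)\le w_{X,Y}$ as $s(X)\ge s(Y)-w_{X,Y}$ shows that for every node $X$ one has $\hat s(X)=\max\big(0,\ \max_p(-w_p)\big)$, the inner maximum ranging over directed paths $p$ of $\N'$ leaving $X$ (with the empty path contributing $0$, reflecting non-negativity). Consistency of $\N'$ rules out negative cycles, so this maximum is attained on a simple path, along which $-w_p=\sum_{a\in p}(-w_a)\le\sum_{a\in A_{\C'}}|w_a|\le T$. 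Together with $\hat s\ge 0$ this gives $0\le \hat s(X)\le T$ for all $X\in\T$, which lies in $\{-T,\dots,T\}$, completing the argument. The main obstacle is therefore not the reduction itself but certifying that the least feasible schedule stays within the stated magnitude; everything else follows from Theorem~\ref{thm:main_stn} and the disjunctive reading of hyperarc satisfaction.
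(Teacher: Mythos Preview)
Your argument is correct and follows essentially the same approach as the paper: the proposition is cited from~\cite{CominPR17} and not proved in full here, but the appendix (Definition~\ref{def:cert_leastfeasible}) sketches exactly the idea you implement --- project the consistent general-\HTN onto a conservative \STN by selecting, for each hyperarc, one disjunct witnessed by a fixed feasible schedule, and then invoke the integrality of potentials furnished by Bellman--Ford (Theorem~\ref{thm:main_stn}). Your added simple-path bound for the range of $\hat s$ is sound and fills in the one step the paper leaves implicit.
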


%\delCar{The second interesting property is that deciding \GTNC is \NP-complete, as proved in the following theorem.}
\noindent The following theorem states that \GHTP is \NP-complete, in a strong sense.
\begin{mytheorem}[\cite{CominPR17}]\label{Teo:npcompleteness}
\GHTP is an \NP-complete problem even if the input instances $\H=(V, \A)$
are restricted to satisfy $w_A(\cdot) \in\{-1, 0, 1\}$ and $|H_A|, |T_A|\leq 2$ for every $A\in\A$.
\end{mytheorem}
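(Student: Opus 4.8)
The plan is to prove the two directions of \NP-completeness separately, and the first ingredient is \emph{membership in \NP}, for which I would lean on Proposition~\ref{prop:int_sched}. When every weight lies in $\{-1,0,1\}$ the bound $T=\sum_{A\in\A,\,v\in\T}|w_A(v)|$ is at most the size $m_\A$, hence polynomial; so a consistent $\H$ admits an integer feasible schedule $s:\T\to\{-T,\dots,T\}$ whose values need only $O(\log m_\A)$ bits. Such a schedule is a polynomial-size certificate, and checking the $\min$/$\max$ inequality of each hyperarc against it is polynomial, so $\GHTP\in\NP$. The very same bound will let me pass freely between real and integer feasibility in the hardness proof.

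For \emph{hardness} I would reduce from \textsc{3-Sat}. The enabling observation is that a cardinality-$2$ multi-head hyperarc $(t,\{h_1,h_2\},w)$ is exactly the binary disjunction $\bigvee_k\big(s(h_k)-s(t)\le w_k\big)$, and dually a cardinality-$2$ multi-tail hyperarc is $\bigvee_k\big(s(h)-s(t_k)\le w_k\big)$; since all such constraints are shift-invariant, I may normalise any feasible schedule by $s(z)=0$ for a fixed reference node $z$. For each variable $x_i$ I introduce two nodes $P_i,N_i$, bounded by standard arcs to $0\le s(P_i)-s(z),\,s(N_i)-s(z)\le 1$; under an integer schedule these values lie in $\{0,1\}$. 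A single multi-tail hyperarc $(\{P_i,N_i\},z,{-1})$ then forbids both being $0$, and a single multi-head hyperarc $(z,\{P_i,N_i\},0)$ forbids both being $1$, so exactly one of $P_i,N_i$ carries the value $1$; I read $x_i$ as true when $s(P_i)=1$, and route positive (resp.\ negative) occurrences to the indicator $P_i$ (resp.\ $N_i$).

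The heart of the reduction is the \emph{clause gadget}. A clause $\ell_1\vee\ell_2\vee\ell_3$ with indicators $I_1,I_2,I_3$ is satisfied iff $\max_k s(I_k)\ge 1$, i.e.\ iff $\bigvee_k\big(s(z)\le s(I_k)-1\big)$, which is a multi-tail disjunction sharing the head $z$ but of cardinality $4$. To respect $|T_A|\le 2$ I split it through a fresh node $a_j$: one hyperarc $(\{I_1,a_j\},z)$ offers the first disjunct or ``$s(z)\le s(a_j)$'', while a second hyperarc $(\{I_2,I_3\},a_j,{-1})$ forces $s(a_j)\le\max\{s(I_2)-1,s(I_3)-1\}$. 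The point is that, with the variable values fixed, some value of $a_j$ satisfies both hyperarcs iff some literal is true: if $\ell_1$ is true the first disjunct fires; if $\ell_2$ or $\ell_3$ is true one sets $a_j$ to the (nonnegative) maximum; and if all three fail then $s(z)=0\le s(a_j)\le\max\{s(I_2),s(I_3)\}-1=-1$ is contradictory. All weights stay in $\{-1,0\}$ and every hyperarc has heads and tails of size at most $2$.

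Finally I would settle \emph{correctness}: a satisfying assignment yields an integer feasible schedule (set each $P_i/N_i$ to its truth value and each $a_j$ to the witnessing maximum), and conversely any real feasible schedule may be taken integral by Proposition~\ref{prop:int_sched}, whereupon the nodes $P_i$ read off a satisfying assignment; strong \NP-completeness follows since all weights are $O(1)$. I expect the main obstacle to be the faithful design of the clause gadget under the three simultaneous restrictions --- cardinality $\le 3$, weights in $\{-1,0,1\}$, and the rigid shared-endpoint shape forced on hyperarc disjunctions --- and in particular proving that the auxiliary node $a_j$ captures the intended $\max$ (equivalently the residual disjunction) without ever relaxing or over-constraining the clause.
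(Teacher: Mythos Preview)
This theorem is not proved in the present paper at all: it is quoted from~\cite{CominPR17} as a background result, with no accompanying argument, so there is no in-paper proof to compare your proposal against. Your sketch is nonetheless a correct and complete proof. Membership in \NP\ via Proposition~\ref{prop:int_sched} is the standard line, and your 3-SAT reduction checks out under the definitions: the variable gadget (a multi-tail hyperarc enforcing $\max\{s(P_i),s(N_i)\}\ge 1$ paired with a multi-head hyperarc enforcing $\min\{s(P_i),s(N_i)\}\le 0$, on top of the $[0,1]$ box) yields the required exclusive-or, and the two-stage clause gadget through the auxiliary node $a_j$ faithfully encodes the ternary disjunction while keeping $|T_A|\le 2$ and all weights in $\{-1,0\}$. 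The only cosmetic point is that Proposition~\ref{prop:int_sched} asserts the \emph{existence} of an integer feasible schedule rather than that an arbitrary real one ``may be taken integral''; but you can first invoke the proposition and then translate by the integer $-s(z)$, and the reverse direction goes through unchanged.

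For context, the closest argument actually carried out in this paper is the proof of Theorem~\ref{Teo:npcompleteness_tail}, which also reduces from 3-SAT but to the \emph{multi-tail only} model augmented with $\TTHREE$-constraints. There the exclusive-or on $(x_i,\overline{x}_i)$ is obtained by pairing a multi-tail hyperarc with a $\TTHREE$-constraint $(0\le x_i\le 0)\vee(0\le\overline{x}_i\le 0)$ rather than with a multi-head hyperarc, and the clause gadget pins a fresh node $C_j$ to value~$1$ via two standard arcs and then uses a single multi-tail hyperarc from the three literal nodes into $C_j$ (subsequently split to respect $|T_A|\le 2$). Your construction exploits precisely the extra freedom that \GHTP\ grants---mixing multi-head and multi-tail hyperarcs---to dispense with any $\TTHREE$-constraint in the variable gadget; the clause gadgets in the two reductions are essentially dual realizations of the same ``split the ternary $\max$'' idea.
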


As observed in \cite{CominPR17}, Theorem~\ref{Teo:npcompleteness} motivates the study of consistency problems on \HTN{s} having either
only multi-head or only multi-tail hyperarcs. In the former case, the consistency-checking problem is called \HHTP,
while in the latter it is \THTP; as stated in Theorem~\ref{Teo:MainAlgorithms},
the complexity of checking these two problems turns out to be lower than that for \DTP{s},
\ie both $\HHTP, \THTP \in \NP\cap\coNP$, instead of being $\NP$-complete.

So it's worth considering the following specialized notion of consistency for \HTN{s}.
\begin{mydefinition}[\HHTP]
Given a multi-head \HTN \mbox{$\H=(\T,\A)$},
	the \HHTP problem is that of deciding whether or not there exists a schedule \mbox{$s:\T \rightarrow \RR$} such that:
\[s(t_A) \geq \min_{v\in H_A} \{s(v) - w_A(v) \},\quad \forall A\in\A.\]
%Any such a schedule \mbox{$s:V \rightarrow \R$} is called \textit{feasible}.
%A \HTN that admits at least one feasible schedule is called \textit{consistent}.
\end{mydefinition}

The tightest currently known worst-case time complexity upper-bound for solving (integer-weighted) \HHTP{s} was established
  in~\cite{CominPR17} and it is expressed in the following theorem.

\begin{mytheorem}[\cite{CominPR17}]\label{Teo:MainAlgorithms}
The following proposition holds on (integer-weighted, multi-head) \HTN{s}.
There exists an $O\big((|\T|+|\A|)\cdot m_{\A}\cdot W\big)$ pseudo-polynomial time algorithm for checking \HHTP;
given any \HTN $\H=(\T, \A)$, if $\H$ is consistent the same algorithm also returns an integer-valued feasible schedule $s:\T\rightarrow \interi$ of $\H$;
otherwise, it returns a negative certificate in the form of a negative \emph{hypercycle} (cfr. Appendix~A for more details on that).

Above, $W\triangleq \max_{A\in\A, v\in H_A} |w_A(v)|$ is the maximum absolute value among the weights.
\end{mytheorem}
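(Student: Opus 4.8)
The plan is to connect the consistency of a (multi-head) \HTN to the winning condition of a Mean Payoff Game (MPG), and then to solve the latter by a value-iteration scheme that is precisely the hyper-generalization of the BF-VI procedure recalled above. First I would rewrite each multi-head constraint $s(t_A)\geq \min_{v\in H_A}\{s(v)-w_A(v)\}$ in its equivalent disjunctive form $\bigvee_{v\in H_A}\big(s(v)-s(t_A)\leq w_A(v)\big)$, so that a schedule is feasible \iff for every hyperarc at least one head has non-negative reduced weight $w^{s}_A(v)\triangleq w_A(v)-s(v)+s(t_A)\geq 0$. Reading the existential choice over the heads of each hyperarc as a move of one player (the scheduler) and the routing through the tails as a move of the opponent turns $\H$ into a game arena with one vertex per time-point and one per hyperarc, hence $O(|\T|+|\A|)$ vertices and $O(m_\A)$ arcs. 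The core lemma to establish is the duality ``$\H$ is consistent \iff the scheduler wins the associated MPG \iff $\H$ contains no negative hypercycle'', the negative hypercycle being the hyper-analogue of the negative cycle that certifies inconsistency of an \STN (cfr. the characterization recalled in Theorem~\ref{thm:main_stn} and the restriction to the multi-head case motivated by Theorem~\ref{Teo:npcompleteness}).

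Next I would design the value iteration itself. Working on the lattice of energy values $(\interi\cup\{+\infty\})^{\T}$, I would maintain integer labels and, emulating BF-VI, keep a queue of the nodes whose label has just dropped; relaxing a multi-head hyperarc $A$ propagates its constraint by a single $\min$-over-heads update (dual to the $\min$-over-incoming-arcs that ordinary BF performs in the \STN case). One proves that this hyper-relaxation operator is order-preserving, so the labels decrease monotonically from the all-$+\infty$ initialization towards its least fixed point; from a fixed point one recovers a feasible schedule when the scheduler wins, and Proposition~\ref{prop:int_sched} guarantees that this schedule can be taken integral and bounded. The stopping rule is the energy threshold of the game: as soon as some label would be pushed below $-(|\T|+|\A|)\cdot W$, the scheduler provably cannot win, and tracing back the tight heads that forced the successive decreases yields a negative hypercycle as the inconsistency certificate.

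For the running time I would argue as in the energy-lifting analysis of MPGs. Each label is an integer that only decreases, starts at $+\infty$, and can never be driven below $-(|\T|+|\A|)\cdot W$ without certifying inconsistency; hence each node is effectively relaxed $O((|\T|+|\A|)\cdot W)$ times. A single relaxation of a hyperarc $A$ costs $O(|A|)$ since it scans all heads in $H_A$, so one full sweep over all hyperarc incidences costs $O(m_\A)$. Combining the two estimates, the procedure performs $O((|\T|+|\A|)\cdot W)$ effective sweeps of cost $O(m_\A)$ each, giving the claimed $O\big((|\T|+|\A|)\cdot m_\A\cdot W\big)$ bound; integrality of the returned schedule is inherited from the integrality of the weights through Proposition~\ref{prop:int_sched}.

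The step I expect to be the main obstacle is the correctness-and-termination analysis of the $\min$-type relaxation. Unlike ordinary Bellman-Ford, where constraints are purely conjunctive and the pointwise minimum of two feasible schedules is again feasible, here the disjunctive ($\exists$ over heads) nature destroys this lattice structure on feasible schedules, so the naive potential-decrease argument neither bounds the number of iterations nor directly exposes the obstruction. The technical heart is therefore to (i) prove that the hyper-relaxation operator is monotone with a least fixed point from which a feasible schedule is recoverable exactly when the scheduler wins, and (ii) show that hitting the energy threshold $-(|\T|+|\A|)\cdot W$ is equivalent to the existence of a negative hypercycle; both are precisely the points where the MPG/energy-game machinery, rather than plain \STN reasoning, is indispensable.
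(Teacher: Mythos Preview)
This theorem is not proved in the present paper: it is quoted verbatim from~\cite{CominPR17} as background, with no accompanying proof here. There is therefore nothing in this paper to compare your attempt against.

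That said, your sketch is faithful to the actual approach of~\cite{CominPR17}: the reduction of multi-head \HTN{} consistency to a Mean-Payoff\slash Energy game on an arena with $O(|\T|+|\A|)$ positions and $O(m_\A)$ moves, the monotone value-iteration (hyper-relaxation) converging to a least fixed point, the energy threshold $\Theta((|\T|+|\A|)\cdot W)$ as the halting criterion, and the extraction of a negative hypercycle from the trace of tight updates when the threshold is breached, are exactly the ingredients used there. The duality ``consistent $\Leftrightarrow$ scheduler wins $\Leftrightarrow$ no negative hypercycle'' that you single out as the core lemma is indeed the structural theorem of that reference, and your identification of the monotonicity\slash termination analysis of the $\min$-type relaxation as the technical crux is accurate. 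One small point: in~\cite{CominPR17} the iteration is phrased as pushing schedule values \emph{upward} toward the least feasible schedule (consistently with how \TTP and \TTHTP are later built on top of it in this paper), rather than decreasing energy labels from $+\infty$; the two views are dual, but if you write this up you should pick one orientation and keep it throughout.
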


Concluding this section we recall that the two problems \HHTP and \THTP are actually inter-reducible, \ie one can check any one of the
two models in $f(m,n,W)$-time whenever there's an $f(m,n,W)$-time procedure for checking the consistency of the other one.
\begin{mytheorem}[\cite{CominPR17}]
\HHTP and \THTP are inter-reducible by means of $\log$-space, linear-time, local-replacement reductions.
\end{mytheorem}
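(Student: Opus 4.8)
The plan is to exhibit a single \emph{negation} (time-reversal) transformation that simultaneously realizes both directions of the reduction, exploiting the algebraic duality between the $\min$-based (multi-head) and $\max$-based (multi-tail) consistency conditions. The core observation is that negating a schedule swaps the two condition types. Concretely, starting from a multi-head constraint $s(t_A)\geq\min_{v\in H_A}\{s(v)-w_A(v)\}$ and substituting $s(v)=-s'(v)$ for every node, one gets $-s'(t_A)\geq\min_{v\in H_A}\{-s'(v)-w_A(v)\}=-\max_{v\in H_A}\{s'(v)+w_A(v)\}$, which after multiplying by $-1$ reads $s'(t_A)\leq\max_{v\in H_A}\{s'(v)+w_A(v)\}$. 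This is exactly a multi-tail constraint whose head is the old tail $t_A$, whose tail-set is the old head-set $H_A$, and whose weights $w_A$ are left untouched.

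Building on this, first I would define the reduction syntactically as a local replacement: given a multi-head \HTN $\H=(\T,\A)$, produce the multi-tail \HTN $\H'=(\T,\A')$ on the same node set in which each multi-head hyperarc $A=(t_A,H_A,w_A)$ is replaced, independently of the others, by the multi-tail hyperarc $A'=(H_A,t_A,w_A)$ (\ie the roles of head and tail are exchanged while every weight is copied verbatim). Standard arcs with $|A|=2$ are the singleton case and are handled by the same rule. Since each hyperarc is transformed in isolation by copying its incidence list and reinterpreting the distinguished node, the map is computable by a single left-to-right pass over the encoding of $\A$, hence in linear time and in logarithmic working space (a constant number of indices into the input suffices).

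Next I would establish correctness by promoting the per-constraint identity to the level of schedules: the map $s\mapsto s'$, $s'(v)\triangleq -s(v)$, is a bijection on $\RR^\T$, and by the displayed equivalence $s$ is feasible for $\H$ \iff $s'$ is feasible for $\H'$. Hence $\H$ is consistent \iff $\H'$ is consistent, giving the reduction from \HHTP to \THTP. Crucially, the very same construction applied to a multi-tail instance yields a multi-head instance (the algebra is symmetric: negation turns $\max$ back into $\min$ and flips the inequality the other way), so the transformation is an involution and simultaneously furnishes the reduction from \THTP to \HHTP; no separate argument is needed for the reverse direction.

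Finally, I would confirm the claimed resource profile. The reduction preserves the order $|\T|$, the size $m_\A$ (each hyperarc keeps its cardinality), and the maximum absolute weight $W$ (weights are unchanged); therefore any $f(m,n,W)$-time consistency checker for one model, composed with the linear-time transformation, yields an $f(m,n,W)$-time checker for the other, as asserted in the concluding remark. The only points demanding care---and the mild obstacle here---are the sign bookkeeping in the $\min/\max$ swap (ensuring the inequality direction flips exactly once) and the verification that the replacement is genuinely local and parameter-preserving, so that the log-space and linear-time guarantees hold simultaneously; neither presents a real difficulty.
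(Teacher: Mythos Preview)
Your proposal is correct and follows essentially the same approach as the paper. Although the stated theorem is only cited from~\cite{CominPR17} without proof, the paper proves its generalization (Proposition~\ref{prop:inter-reducitble-HTNs}) in the appendix using precisely your time-reversal idea: reverse each hyperarc locally (swap tail and head roles, keep weights), and verify correctness via the bijection $s\mapsto -s$, which exchanges the $\min$-based multi-head condition with the $\max$-based multi-tail condition.
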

Thus, Theorem~\ref{Teo:MainAlgorithms} extends to multi-tail \HTN{s} (\ie they're checkable in pseudo-poly time).

\section{Faster Deterministic Algorithm for $\TTWO$DTPs}\label{sect.Type2-TP_Algo}
This section offers a deterministic quadratic time algorithm for solving temporal problems having only $\{\TONE, \TTWO\}$-constraints, as defined below.

The same algorithm will be leveraged to solve RDTPs fastly, later on in Section~\ref{sect.RDTP_ALGO}.
\begin{mydefinition}
Any \RDTN $\N=(\T, \C_{\TONE}\cup \C_{\TTWO} \cup \C_{\TTHREE})$ having $\C_{\TTHREE}=\emptyset$ is called \emph{$\TTWO$DTN}.

So, $\TTWO$DTNs are denoted simply as $(\T, \C_{\TONE}\cup \C_{\TTWO})$.
The corresponding temporal problem, \ie \emph{$\TTWO$DTP}, is that of determining whether a given $\TTWO$DTN is consistent or not.
\end{mydefinition}

One possible solution to $\TTWO$DTPs is Kumar's reduction from RDTPs to CRCs~\cite{Kumar05}.
Our solution, named \TTP, employs kind of a value-iteration approach in which all  are initially set to zero
and then updated monotonically upwards by necessary arc relaxations --
  this is somehow reminiscent of the BF-VI algorithm for STPs mentioned in Section~\ref{sect.Background}.
Indeed, given a $\TTWO$DTN $\N_{\TTWO} = (\T,\C_{\TONE}\cup \C_{\TTWO})$,
   we firstly solve the \STP $\N_{\TONE} = (\T,\C_{\TONE})$ (\eg with BF-VI).
If $\N_{\TONE}$ is consistent, the returned least feasible schedule
$\hat{\varphi}_\N$ provides an initial candidate, the next step in mind being that of satisfying all the $\TTWO$-constraints.
For this, recall that the disjuncts of any $c_{\TTWO}\in\C_{\TTWO}$ are arranged according to their nominal ordering,
so that we can try to satisfy any given $c_{\TTWO}$ by iteratively picking the next (\ie in ascending order) unsatisfied disjunct of $c_{\TTWO}$
and by enforcing its lower-bound constraint in an auxiliary \STN as if it were a $\TONE$-constraint.
While there's an unsatisfied $\TTWO$-constraint $c_{\TTWO}$,
the current candidate schedule is thus increased by the least necessary amount satisfying both $c_{\TTWO}$ and the whole $\C_{\TONE}$.
It turns out that this can be done efficiently by performing $|\C_{\TTWO}|$ calls to the Dijkstra shortest paths algorithm~\cite{Dijkstra1959}.
In order to show this, let us point out two key facts (\ie Lemma~\ref{lem:slack}~and~\ref{lem:update}).
\begin{mylemma}\label{lem:slack}
Let $\N = (\T,\C_{\TONE})$ be any \STN, and let $\varphi, \varphi'$ be any pair of schedules of $\N$.

Let $\N^{\varphi}$ be the \STN \emph{reweighted} according to the reduced-costs weight transformation~$w^{\varphi}$
(\ie each weight $w_{X,Y}$ in $\N$ is simply replaced by $w^{\varphi}_{X,Y}$),
let $\N^{\varphi'}$ be the same \wrt $w^{\varphi'}$.
Let $\delta^{\varphi}_X:\T\rightarrow \mathbb{Z}\cup\{+\infty\}$
be the length of the shortest path in $\N^{\varphi}$ from any $T\in\T$ to $X\in \T$,
and let $\delta^{\varphi'}_X$ be the same \wrt $\N^{\varphi'}$.

Then for every $T,X\in\T$, either $\delta^{\varphi'}_X(T)$ and $\delta^{\varphi}_X(T)$ are both $+\infty$, or the following holds:
\[ \delta^{\varphi'}_X(T) - \delta^{\varphi}_X(T) =  \big(\varphi'(T)-\varphi(T)\big) - \big(\varphi'(X)-\varphi(X)\big).\]
\end{mylemma}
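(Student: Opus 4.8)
The plan is to reduce everything to a single telescoping identity for path weights under the reduced-cost reweighting, and then to observe that shortest-path minimization commutes with a uniform additive shift.

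First I would fix any two nodes $T, X \in \T$ and consider an arbitrary path $p = (v_0, \ldots, v_k)$ with $v_0 = T$ and $v_k = X$ in the underlying graph of $\N$, which is the very same graph underlying both $\N^\varphi$ and $\N^{\varphi'}$ since reweighting alters only the weights, not the arc set. Expanding the reduced weight of each arc along $p$ according to the definition $w^\varphi_a = w_a - \varphi(h(a)) + \varphi(t(a))$ and summing over the arcs, the potential terms telescope, leaving
\[ w^\varphi_p = w_p + \varphi(T) - \varphi(X), \]
and, identically, $w^{\varphi'}_p = w_p + \varphi'(T) - \varphi'(X)$. The crucial consequence is that the difference
\[ w^{\varphi'}_p - w^\varphi_p = \big(\varphi'(T) - \varphi(T)\big) - \big(\varphi'(X) - \varphi(X)\big) \]
is one and the same constant $c_{T,X}$ for \emph{every} $T$-to-$X$ path $p$; in other words, passing from $\varphi$ to $\varphi'$ shifts the length of every such path by the identical amount $c_{T,X}$.

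Next I would take minima. Since each individual $T$-to-$X$ path length is shifted by the common constant $c_{T,X}$, the minimum over all such paths is shifted by $c_{T,X}$ as well (and the set of minimizing paths is preserved); hence $\delta^{\varphi'}_X(T) = \delta^\varphi_X(T) + c_{T,X}$, which rearranges into the claimed identity. For the degenerate case I would note that whether a finite shortest path exists at all is a reachability property of the underlying arc set, which is identical for $\N^\varphi$ and $\N^{\varphi'}$; therefore $\delta^\varphi_X(T) = +\infty$ exactly when $X$ is unreachable from $T$, which is exactly when $\delta^{\varphi'}_X(T) = +\infty$. This yields the dichotomy in the statement.

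The argument is essentially routine, so I do not expect a genuine obstacle. The only point requiring a little care is the interchange of the uniform shift with the minimization: this is legitimate precisely because $c_{T,X}$ depends only on the endpoints $T, X$ and not on the particular path, which is exactly what the telescoping identity guarantees. A secondary remark worth making is that reweighting by a potential leaves the total weight of every cycle unchanged, so well-definedness of the shortest paths (i.e. the absence of reachable negative cycles, which is what keeps $\delta$ out of $-\infty$) is a property shared by $\N^\varphi$ and $\N^{\varphi'}$, consistent with the stated codomain $\mathbb{Z}\cup\{+\infty\}$.
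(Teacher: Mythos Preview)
Your proposal is correct and follows essentially the same route as the paper: both arguments rest on the telescoping identity $w^\psi_p = w_p + \psi(T) - \psi(X)$ for any path $p$ from $T$ to $X$ and any schedule $\psi$, and both conclude by taking the minimum over paths. The only cosmetic difference is that the paper passes through the original distance $\delta_X(T)$ as an intermediate (writing $\delta^\varphi_X(T) = \varphi(T)-\varphi(X)+\delta_X(T)$ and likewise for $\varphi'$, then subtracting), whereas you compare $w^{\varphi'}_p$ and $w^\varphi_p$ directly and observe that the shift $c_{T,X}$ is path-independent; this is the same argument in slightly leaner form.
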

\begin{proof}
  Let $T,X\in\T$ (arbitrarily), w.l.o.g. $X$ is reachable from $T$ in $\N$ (otherwise, $\delta^{\varphi'}_X(T)$ and $\delta^{\varphi}_X(T)$ are both $+\infty$).
Consider any path $p_{T,X}$ from $T$ to $X$ in $\N^{\varphi}$, \ie for some $k\geq 0$:
  \[ p_{T,X}\triangleq (T=T_0, T_1, T_2, \ldots, T_k = X),
    \text{ having total weight } w^\varphi_{p_{T,X}}\triangleq \sum_{i=0}^{k-1} w^\varphi_{T_i, T_{i+1}} \text{ in } \N^{\varphi}.\]
  Then, the following holds by telescoping:
  \begin{align*}
    w^\varphi_{p_{T,X}} &= \big(w_{T_0,T_1} - \varphi(T_1)+\varphi(T_0)\big) + \big(w_{T_1,T_2}-\varphi(T_2)+\varphi(T_1)\big) + \ldots \\
     & \hspace{35ex} \ldots + \big(w_{T_{k-1},T_k}-\varphi(T_k)+\varphi(T_{k-1})\big) \\
           &= \varphi(T_0) - \varphi(T_k) + \sum_{i=0}^{k-1} w_{T_i,T_{i+1}} = \varphi(T) - \varphi(X) + w_{p_{T,X}}. \tag{1}
  \end{align*}
Thus, provided $\delta_{X}(T)$ is the shortest path distance from $T$ to $X$ in the original network $\N$, we have:
  \begin{align*}\hspace{-4.5ex}
    \delta^\varphi_{X}(T) &= \min\left\{w^\varphi_{p_{T,X}} \mid p_{T,X}\text{ is a path from $T$ to $X$ in $\N$} \right\} & \text{(by def. of $\delta^\varphi_{X}$)} \\
    &= \min\left\{\varphi(T) - \varphi(X) + w_{p_{T,X}} \mid p_{T,X}\text{ is any path from $T$ to $X$ in $\N$} \right\} & \text{(by (1))} \\
    &= \varphi(T) - \varphi(X) + \min\left\{w_{p_{T,X}} \mid p_{T,X}\text{ is any path from $T$ to $X$ in $\N$} \right\} & \text{($\varphi$ is constant here)}\\
    &= \varphi(T) - \varphi(X) + \delta_{X}(T). & \text{(by def. of $\delta_{X}$)}
  \end{align*}
For the same reason, $\delta^{\varphi'}_{X}(T) = \varphi'(T) - \varphi'(X) + \delta_{X}(T)$. Therefore,
\begin{align*}
  \delta^{\varphi'}_X(T) - \delta^{\varphi}_X(T) &= \big(\varphi'(T) - \varphi'(X)+ \delta_{X}(T)\big) - \big(\varphi(T) - \varphi(X)+ \delta_{X}(T)\big) \\
  &= \big(\varphi'(T)-\varphi(T)\big) - \big(\varphi'(X)-\varphi(X)\big).
\end{align*}
This concludes the proof.
\end{proof}

\begin{mylemma}\label{lem:update}
Let $\N = (\T,\C_{\TONE})$ be any \STN, and let $\hat{\varphi}$ be the least feasible schedule of $\N$.
Fix some $X\in\T$ and some real value $l_X\geq \hat{\varphi}(X)$.
Let $\N' = (\T', \C'_{\TONE})$ be the auxiliary \STN obtained by introducing a corresponding lower-bound $\TONE$-constraint over $X$, \ie
\[
  \T'\triangleq \T\cup\{z\}, \\
  \C'_{\TONE}\triangleq \C_{\TONE}\cup\big\{(z-T\leq 0)\mid T\in\T\big\}\cup \big\{(z-X\leq -l_X)\big\}.
\]
Let $\N^{\hat{\varphi}}$ be the \STN reweighted according to the reduced-costs weight transformation $w^{\hat{\varphi}}$,
  and let $\delta^{\hat{\varphi}}_X(T)$ be the length of the shortest path in $\N^{\hat{\varphi}}$ from (any) $T\in\T$ to $X$.

Then, for every $T\in\T$, the least feasible schedule $\hat{\varphi}'$ of $\N'$ is given by:
\[ \hat{\varphi}'(T) = \hat{\varphi}(T) + \max\big(0, l_X - \hat{\varphi}(X) - \delta^{\hat{\varphi}}_X(T) \big).\]
\end{mylemma}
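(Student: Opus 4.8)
The plan is to guess $\hat{\varphi}'$ explicitly, verify that it is a feasible schedule of $\N'$, and then show that it lies pointwise below every nonnegative feasible schedule of $\N'$; the pointwise-minimum property recalled in Section~\ref{sect.Background} then certifies it as the least feasible schedule. I would first translate the claim into the original (non-reweighted) metric. Writing $\delta_X(T)$ for the shortest-path distance from $T$ to $X$ in $\N$ itself, the identity $\delta^{\hat{\varphi}}_X(T)=\hat{\varphi}(T)-\hat{\varphi}(X)+\delta_X(T)$ established inside the proof of Lemma~\ref{lem:slack} shows that the target formula is equivalent to
\[\hat{\varphi}'(T)=\max\big(\hat{\varphi}(T),\,l_X-\delta_X(T)\big)\quad (T\in\T),\qquad \hat{\varphi}'(z):=0.\]
This reformulation is the convenient object to reason about, since both entries of the $\max$ have transparent meanings: $\hat{\varphi}(T)$ is the old value, while $l_X-\delta_X(T)$ is the smallest value $T$ may take once $X$ is forced up to $l_X$. (When $X$ is unreachable from $T$ we read $\delta_X(T)=+\infty$, so both sides collapse to $\hat{\varphi}(T)$, matching the $+\infty$ convention of the statement.)

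For feasibility I would check the three families of $\N'$-constraints in turn. The constraints $(z-T\le 0)$ amount to $\hat{\varphi}'(T)\ge 0$, which is immediate from $\hat{\varphi}'(T)\ge\hat{\varphi}(T)\ge 0$. The single constraint $(z-X\le -l_X)$ amounts to $\hat{\varphi}'(X)\ge l_X$, which holds with equality because $\delta_X(X)=0$ and $l_X\ge\hat{\varphi}(X)$. The original $\TONE$-constraints $(Y-X'\le w_{X',Y})$ are the substantive case: when $\hat{\varphi}'(Y)=\hat{\varphi}(Y)$ the inequality follows from feasibility of $\hat{\varphi}$ together with $\hat{\varphi}'\ge\hat{\varphi}$, while when $\hat{\varphi}'(Y)=l_X-\delta_X(Y)$ it follows from the shortest-path triangle inequality $\delta_X(X')\le w_{X',Y}+\delta_X(Y)$, obtained by routing a path from $X'$ to $X$ through the arc $(X',Y)$.

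For minimality, let $s'$ be any nonnegative feasible schedule of $\N'$. Its restriction to $\T$ is a nonnegative feasible schedule of $\N$, so minimality of $\hat{\varphi}$ gives $s'(T)\ge\hat{\varphi}(T)$ for every $T\in\T$. Separately, $s'(z)\ge 0$ together with $(z-X\le -l_X)$ forces $s'(X)\ge l_X$, and telescoping the $\TONE$-constraints along any path from $T$ to $X$ yields $s'(X)-s'(T)\le\delta_X(T)$, hence $s'(T)\ge l_X-\delta_X(T)$. Taking the maximum of the two lower bounds gives $s'(T)\ge\hat{\varphi}'(T)$ for all $T\in\T$, and trivially $s'(z)\ge 0=\hat{\varphi}'(z)$.

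Since $\hat{\varphi}'$ is thus a nonnegative feasible schedule that is pointwise $\le$ every nonnegative feasible schedule, it is the least feasible schedule of $\N'$; re-expressing $\delta_X$ through Lemma~\ref{lem:slack} recovers exactly the stated reduced-cost expression. I expect the main obstacle to be the $\TONE$-constraint case of the feasibility direction: it is precisely where the shortest-path structure enters through the triangle inequality, and it is dual to the telescoping step used for minimality, so keeping the orientation of the arc $(X',Y)$ and of the distances $\delta_X(\cdot)$ consistent is the one place where a sign or direction slip could silently break the argument.
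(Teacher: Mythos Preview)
Your proof is correct and follows essentially the same two-step structure as the paper (feasibility plus pointwise minimality), using the same ingredients: the shortest-path triangle inequality for the $\TONE$-feasibility case and path telescoping for the lower bound. The only cosmetic difference is that you first rewrite the formula in the original metric as $\max(\hat\varphi(T),\,l_X-\delta_X(T))$ via Lemma~\ref{lem:slack} and then argue both directions there, whereas the paper stays in reduced costs throughout; your presentation is arguably cleaner, and your minimality argument (comparing to an arbitrary nonnegative feasible $s'$) is more explicit than the paper's informal ``slack'' discussion.
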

\begin{proof}
%Since $l_X\geq \hat{\varphi}(X)$ and $\delta^{\hat{\varphi}}_X(X)=0$, then
%$\hat{\varphi}'(X)=$

Let w.l.o.g. $\hat{\varphi}'(z)=0$. In order to become feasible for $\N'$ we claim, for every $T\in \T$,
that the least feasible schedule $\hat{\varphi}(T)$ must be increased by at least
$\max\big(0, l_X - \hat{\varphi}(X) - \delta^{\hat{\varphi}}_X(T) \big)$ time units (because of the lower-bound constraint $(z-X\leq -l_X)\in \C'_{\TONE}$).
Indeed, for any $T\in \T$ that reaches $X$ in $\N$, the $\TONE$-constraint $(X-T\leq \delta_X(T))$ (which is induced by telescoping all of
the $\TONE$-constraints along any shortest path from $T$ to $X$) must be satisfied.
On the other hand, by Lemma~\ref{lem:slack} (applied to $\hat\varphi$ and to the
  anywhere-zero\footnote{The anywhere-zero schedule $\zeta$ is that defined as, $\zeta(T)=0$ for every $T\in\T$.} schedule),
  it holds $\hat{\varphi}(X)-\hat{\varphi}(T)=\delta_X(T)-\delta^{\hat{\varphi}}_X(T)$.
This can be seen as follows: if $\hat{\varphi}(T)$ is kept fixed, then $\hat{\varphi}(X)$ can be increased by at most $\delta^{\hat{\varphi}}_X(T)$ time units
	without breaking the induced constraint $(X-T\leq \delta_X(T))$.
Here, $\hat{\varphi}(X)$ must be increased by at least $l_X-\hat{\varphi}(X)$ time units in order to satisfy $(z-X\leq -l_X)\in \C'_{\TONE}$, so
  $\hat{\varphi}(T)$ must be increased by at least the amount said above.

Next, we claim this increase also preserves feasability, \ie it is the \emph{least feasible increase}.
For ease of notation, let $f(z)\triangleq 0$ and $f(T) \triangleq \hat{\varphi}(T) + \max\big(0, l_X - \hat{\varphi}(X) - \delta^{\hat{\varphi}}_X(T) \big)$ $\forall\, T\in\T$.

In order to prove that $f$ satifies all the constraints in $\C_{\TONE}$, pick any $(B-A\leq w_{A,B})\in \C_{\TONE}$.
By hypothesis, it holds:
\begin{equation}
  \hat\varphi(B)-\hat\varphi(A)\leq w_{A,B}.
\end{equation}
For the sake of the argument, let us define: $\Delta_{A,B} \triangleq \big(f(B) - \hat\varphi(B)\big) - \big(f(A) - \hat\varphi(A)\big)$.

So, the following holds:
  \begin{equation}f(B)-f(A) = \hat\varphi(B) - \hat\varphi(A) + \Delta_{A,B}.\end{equation}
Then, either one of the following two cases holds:
\begin{itemize}
 \item If $l_X - \hat\varphi(X) \leq \delta^{\hat{\varphi}}_{X}(B)$, then $f(B) = \hat\varphi(B)$, so $\Delta_{A,B}\leq 0$.
  Therefore,
\begin{align*}
 f(B)-f(A) & \leq \hat\varphi(B) - \hat\varphi(A) & \text{(by (2))} \\
     & \leq w_{A,B}. & \text{(by (1))}
\end{align*}
 \item If $l_X - \hat\varphi(X) > \delta^{\hat{\varphi}}_{X}(B)$,
 it is easy to check that $\Delta_{A,B}\leq \delta^{\hat{\varphi}}_{X}(A) - \delta^{\hat{\varphi}}_{X}(B)$.

By definition of $\delta^{\hat{\varphi}}_{X}$ and since $(B-A\leq w_{A,B})\in \C_{\TONE}$,
 then $\delta^{\hat{\varphi}}_{X}(A)\leq \delta^{\hat{\varphi}}_{X}(B) + w^{\hat\varphi}_{A,B}$.
Therefore,
\begin{align*}
f(B)-f(A) & \leq \hat\varphi(B) - \hat\varphi(A) + \delta^{\hat{\varphi}}_{X}(A) - \delta^{\hat{\varphi}}_{X}(B) \\
		& \leq \hat\varphi(B) - \hat\varphi(A) + w^{\hat\varphi}_{A,B}  = w_{A,B}.
\end{align*}
\end{itemize}
So, in either case, $f(B)-f(A)\leq w_{A,B}$.

Finally, clearly $f(X)=l_X$, so $(z-X\leq -l_X)\in \C'_{\TONE}$ is also satisfied.

This proves $f$ is a feasible schedule of $\N'$. All in, it is the least feasible, \ie $f=\hat\varphi'$.
\end{proof}

With this two facts in mind, the description of \TTP can now proceed more smoothly.

Recall that, firstly, the \STN $\N_{\TONE}$ is checked.
If $\N_{\TONE}$ is already inconsistent, so it is $\N_{\TTWO}$;
  otherwise, $\hat{\varphi}_\N$ is the least feasible schedule of $\N_{\TONE}$.
So, $w^{\hat{\varphi}_\N}\geq 0$ for every constraint in~$\C_{\TONE}$.
Now, for each target node $X\in \T$, the Dijkstra algorithm on input $(\N^{\hat{\varphi}_\N}, X)$ computes $\delta^{\hat{\varphi}_\N}_X(T)$.
The whole distance matrix $\{\delta^{\hat{\varphi}_\N}_X(T)\}_{T\in\T, X\in \T}$ is computed here, and kept stored in memory.
Multiple-sources single-target shortest paths are needed, actually, but these can be easily computed with the traditional Dijkstra's algorithm,
\eg just reverse the direction of all arcs in the input network and treat the single-target node as if it were a single-source.
What follows aims, if there's still an unsatisfied $\TTWO$-constraint $c_{\TTWO}\in\C_{\TTWO}$, at increasing the candidate
schedule $f$ by the least necessary amount satisfying both $c_{\TTWO}$ and the whole $\C_{\TONE}$. So, let us initialize~$f\leftarrow \hat{\varphi}_\N$.
Then the following iterates.

While $\exists$ some $X\in \T$ and $c_X=\bigvee_{i=1}^{k}(l_i\leq X\leq u_i)\in \C_{\TTWO}$ s.t. $f(X)$ doesn't satisfy $c_X$:

if $f(X)>u_k(=\max_i u_i)$, then $\N_{\TTWO}$ is inconsistent (see~Theorem~\ref{thm:TTP_correctness});
  otherwise, let $i^*$ be the smallest $i\in [1,k]$ such that $f(X)<l_i$.
By Lemma~\ref{lem:slack} and given $f$, then $\delta^{f}_X$ is given by:
  \[ \delta^{f}_X(T)\leftarrow
  \delta^{\hat\varphi_0}_X(T) +  \big(f(T)-\hat\varphi_0(T)\big) - \big(f(X)-\hat\varphi_0(X)\big),\; \forall\; T\in\T. \tag{\texttt{rule}-$\delta$}\]
So, by Lemma~\ref{lem:update}, the following updating rule:
  \[ f(T) \leftarrow f(T) + \max\big(0, l_{i^*} - f(X) - \delta^{f}_X(T)\big),\; \forall\; T\in\T.\tag{\texttt{rule}-$f$}\]
  yelds the least feasible schedule for the next auxiliary \STN $\N'_{\TONE}$ obtained by adding the new lower-bound $\TONE$-constraint $(z-X\leq -l_{i^*})$.
At each iteration of the while-loop $\N'_{\TONE}$ is enriched with an additional lower-bound $\TONE$-constraint as above.
So, $\N'_{\TONE}$ has $|\T'|=|\T|+1$ time-points ($z$ included) and at most $|\C'_{\TONE}|\leq |\C_{\TONE}|+|\C_{\TTWO}|$ $\TONE$-constraints
(one $\TONE$-constraint per $c_{\TTWO}\in\C_{\TTWO}$ is enough, as for each $c_{\TTWO}$ only its greatest lower-bound counts).
If the while-loop completes without ever finding $\N_{\TTWO}$ to be inconsistent (because, eventually,
$f(X)>u_k(=\max_i u_i)$ for some $X\in\T$ at some point),
then the last updating of $f$ yelds the least feasible schedule of $\N_{\TTWO}$ (as shown below in~Theorem~\ref{thm:TTP_correctness}).
This concludes the description of \TTP.

Notice that, during the whole computation, the scheduling values can only increase monotonically upwards -- like in a value-iteration process.
\begin{mytheorem}\label{thm:TTP_correctness}
\TTP is correct, \ie on any input $\TTWO$DTN $\N_{\TTWO} = (\T,\C_{\TONE}\cup \C_{\TTWO})$,
it returns a feasible schedule $\varphi:\T\rightarrow\reali$, if $\N_{\TTWO}$ is consistent;
otherwise, it recognizes $\N_{\TTWO}$ as inconsistent.
\end{mytheorem}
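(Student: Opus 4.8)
The plan is to isolate two loop invariants for the candidate schedule $f$, and to derive from them both the soundness of the ``inconsistent'' verdict and the soundness of the returned schedule, adding a separate termination argument. Throughout the run $f$ increases monotonically, since every application of \texttt{rule}-$f$ adds the non-negative quantity $\max(0,\cdot)$ at each time-point; this monotonicity is used repeatedly. The structural facts already established, Lemma~\ref{lem:slack} and Lemma~\ref{lem:update}, carry most of the weight: the former licenses \texttt{rule}-$\delta$, which recovers $\delta^f_X$ from the once-computed matrix $\delta^{\hat\varphi_\N}_X$, while the latter licenses \texttt{rule}-$f$, which turns the current least feasible schedule into the least feasible schedule of the \STN obtained by adding one further lower-bound $\TONE$-constraint.

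First I would justify that the single precomputed distance matrix stays valid across all iterations. Each added constraint has the form $(z-T\le 0)$ or $(z-X\le -l_{i^*})$, \ie an arc directed into the auxiliary node $z$; because $z$ has no arc leaving it towards $\T$, these additions never create new paths between the original time-points, so the distances $\delta_X(T)$ inside $\C_{\TONE}$ are left unchanged. Consequently $f$ always remains a schedule of the base \STN $\N_{\TONE}$, Lemma~\ref{lem:slack} applies to the pair $\hat\varphi_\N,f$ (validating \texttt{rule}-$\delta$), and Lemma~\ref{lem:update} applies with $l_X=l_{i^*}\ge f(X)$ (validating \texttt{rule}-$f$). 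This gives the first invariant: after each iteration $f$ is the least feasible schedule of the current augmented \STN $\N'_{\TONE}$; in particular $f$ satisfies all of $\C_{\TONE}$.

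The heart of the argument is the second invariant: \emph{every feasible schedule $s$ of $\N_{\TTWO}$ dominates $f$ pointwise, $s\ge f$}. This holds at the start because $f=\hat\varphi_\N$ is the least feasible schedule of $\N_{\TONE}$, a subnetwork of $\N_{\TTWO}$. For the inductive step consider an iteration that processes $c_X=\bigvee_{i=1}^{k}(l_i\le X\le u_i)$, letting $i^*$ be the least index with $f(X)<l_{i^*}$. Since $f$ fails $c_X$, the nominal ordering forces $f(X)>u_i$ for every $i<i^*$ (vacuously when $i^*=1$); as $s\ge f$ we get $s(X)>u_i$ for all $i<i^*$, so $s(X)$ lies in none of the first $i^*-1$ intervals, and because $s$ satisfies $c_X$ it must satisfy some disjunct $j\ge i^*$, whence $s(X)\ge l_j\ge l_{i^*}$. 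Thus $s$ is feasible for $\N'_{\TONE}$, and since the updated $f$ is the least feasible schedule of $\N'_{\TONE}$ we recover $s\ge f$. I expect this gap argument to be the main obstacle, for it is exactly what certifies that ``jumping'' $X$ up to the next interval lower bound never overshoots a genuine solution.

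The two output cases now follow. If the algorithm stops and reports inconsistency because $f(X)>u_k=\max_i u_i$, then any feasible $s$ would have $s(X)\ge f(X)>u_k$ and so could not satisfy $c_X$, a contradiction; hence none exists and $\N_{\TTWO}$ is genuinely inconsistent (the initial case, $\N_{\TONE}$ inconsistent, is immediate since $\C_{\TONE}$ is part of $\N_{\TTWO}$). If instead the while-loop exits normally, its guard certifies that $f$ satisfies every $\TTWO$-constraint, while the first invariant certifies that $f$ satisfies $\C_{\TONE}$, so $f$ is a feasible schedule of $\N_{\TTWO}$, as returned. For termination I would observe that $\delta^f_X(X)=0$, so \texttt{rule}-$f$ sets $f(X)=l_{i^*}$ exactly, placing $X$ inside interval $i^*$; monotonicity of $f$ then forces the index $i^*$ to strictly increase on any re-examination of the same $c_X$, so each constraint is processed at most $|c_X|$ times and the loop runs at most $d_{\C_{\TTWO}}$ iterations.
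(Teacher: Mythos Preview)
Your proof is correct and follows essentially the same approach as the paper: both rest on the invariant (via Lemma~\ref{lem:slack} and Lemma~\ref{lem:update}) that after each iteration $f$ is the least feasible schedule of the current augmented \STN $\N'_{\TONE}$, and then split on the two exit cases. Your explicit ``second invariant'' (every feasible schedule of $\N_{\TTWO}$ dominates $f$ pointwise, proved via the gap argument on the nominal ordering) makes rigorous what the paper only sketches with the phrase ``no possible feasible schedule $g<f$ can be neglected,'' and your added remarks on the validity of the precomputed distance matrix and on termination are welcome details the paper defers or omits.
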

\begin{proof}
  Let $\iota=0, 1, 2, \ldots, \iota_h$ be all the iterations of the while-loop of \TTP,
  where $\iota_h$ is assumed to be the last iteration where the updating $\texttt{rule-}f$ is applied.

  For every iteration $\iota\in [1, \iota_h]$, the auxiliary \STN $\N'^{(\iota)}_{\TONE}$ is formally defined as:
  \[ \N'^{(\iota)}_{\TONE}\triangleq (\T\cup\{z\}, \C'^{(\iota)}_{\TONE}), \text{ where } z \text{ is the \emph{zero time-point}, and ... }\]
  \[ \C'^{(\iota)}_{\TONE} \triangleq \C_{\TONE}\cup \big\{(z-T\leq 0)\mid T\in\T\big\}
    \cup \big\{ (z-X^{(\gamma)}\leq -l^{(\gamma)}_{i^*})\mid 1 \leq \gamma \leq \iota\big\},\]
  where, for all $\gamma\leq\iota$, $X^{(\gamma)}$ is the (unique) $X\in \T$ appearing
  in some $\TTWO$-constraint that is considered at the while-loop's $\gamma$-th iteration,
  and $l^{(\gamma)}_{i^*}$ is its corresponding lower-bound.

  Also, let $f^{(\iota)}$ be the candidate schedule as updated by $\texttt{rule-}f$ during the $\iota$-th iteration.

By applying Lemma~\ref{lem:slack}~and~\ref{lem:update} repeatedly,
  for each iteration $\iota$, it holds that $f^{(\iota)}$ is the least feasible schedule of $\N'^{(\iota)}_{\TONE}$.
This is the key invariant at the heart of \TTP.

Concerning actual correctness, firstly, assume that \TTP recognizes $\N_{\TTWO}$ as inconsistent.

If $\N_{\TONE}$ was already inconsistent (cfr.~Theorem~\ref{thm:main_stn}), so $\N_{\TTWO}$ is too.
Otherwise, the inconsistency of $\N_2$ really holds because of these two facts jointly:
(i) the key invariant mentioned above; and,
(ii) at the end of the while-loop, it must be
	$f(X)>u_k(=\max_i u_i)$ for some $\TTWO$-constraint $c_X=\bigvee_{i=1}^{k}(l_i\leq X\leq u_i)\in \C_{\TTWO}$.
Indeed notice that, by (i), no possible feasible schedule $g<f$ can be neglected (discarded)
  during the upward monotone (value-iteration like) updates of the schedules;
and, by (ii), no possible schedule $g\geq f$ can ever satisfy $c_X\in C_2$. So, $\N_{\TTWO}$ is really inconsistent.

Secondly, assume that $\N_{\TTWO}$ is recognized as consistent, by returning a schedule $f^{(\iota_h)}$.

Since \TTP can do that only after the above while-loop completes, the exit condition of the latter
  ensures that $f^{(\iota_h)}$ satisfies every constraint in $\C_{\TTWO}$.
Moreover, the key invariant implies that $f^{(\iota_h)}$ is the least feasible schedule of $\N'^{(\iota_h)}_{\TONE}$,
  so that $f^{(\iota_h)}$ satisfies all of the $\TONE$-constraints in $\C_{\TONE}$.
These two combined, $f^{(\iota_h)}$ is the least feasible schedule~of~$\N_{\TTWO}$.
So, $\N_{\TTWO}$ is indeed consistent.
\end{proof}

The next result asserts that \TTP always halts in time polynomial in the input size.
\begin{mytheorem}\label{thm:ttp_complexity}
Suppose that \TTP runs on input $\TTWO$DTN $\N_{\TTWO} = (\T,\C_{\TONE}\cup \C_{\TTWO})$.

Then, \TTP halts in time
  $O\big(|\T|\cdot |\C_{\TONE}| + |\C_{\TTWO}|\cdot (|\C_{\TONE}| + |\T|\cdot \log |\T|) + |\T|\cdot d_{\C_{\TTWO}}\big)$.
\end{mytheorem}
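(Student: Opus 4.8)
The plan is to decompose the running time of \TTP into the three additive terms of the claimed bound, one per algorithmic phase, and bound each in turn.

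\emph{Phase 1.} First I would charge the single initial run that checks the \STN $\N_{\TONE}=(\T,\C_{\TONE})$ by BF/BF-VI. By Theorem~\ref{thm:main_stn} this costs $O(|\T|\cdot|\C_{\TONE}|)$ and either certifies inconsistency---whence \TTP stops and the bound holds trivially---or returns the least feasible schedule $\hat\varphi_\N$, giving the first summand.

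\emph{Phase 2.} Next comes the reduced-cost reweighting of $\N_{\TONE}$ by $w^{\hat\varphi_\N}$ (a single $O(|\C_{\TONE}|)$ sweep, legitimate because $w^{\hat\varphi_\N}\ge 0$) followed by the distance precomputation. The point I would stress is that the matrix $\delta^{\hat\varphi_\N}_X$ is only ever consulted---via \texttt{rule}-$\delta$---for target nodes $X$ that are the variable of some $\TTWO$-constraint; after intersecting all constraints sharing a variable in a cheap preprocessing step (which does not increase $d_{\C_{\TTWO}}$ and yields $|\C_{\TTWO}|\le|\T|$), there are at most $|\C_{\TTWO}|$ such targets. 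For each I would run Dijkstra once on the reverse network (turning the single target into a single source), over $|\T|+1$ nodes and $O(|\C_{\TONE}|+|\T|)$ arcs; with Fibonacci heaps each run is $O(|\C_{\TONE}|+|\T|\log|\T|)$, so all runs cost $O(|\C_{\TTWO}|\cdot(|\C_{\TONE}|+|\T|\log|\T|))$, the second summand.

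\emph{Phase 3.} The crux is the while-loop. Each iteration applies \texttt{rule}-$\delta$ and \texttt{rule}-$f$, both of which touch every $T\in\T$ exactly once for the single active target $X$, hence cost $O(|\T|)$; together with the rescan of the (at most $|\C_{\TTWO}|\le|\T|$) constraints needed to locate an unsatisfied one, the per-iteration cost is $O(|\T|)$. It remains to bound the number of iterations by $O(d_{\C_{\TTWO}})$. Here I would invoke the monotone, value-iteration nature of the process (scheduling values only ever increase) together with the nominal ordering of the disjuncts: when a constraint $c_X$ is processed, \texttt{rule}-$f$ sets $f(X)=l_{i^*}\le u_{i^*}$, so $c_X$ is immediately satisfied; it can therefore be selected again only after a later iteration has raised $f(X)$ strictly above $u_{i^*}$, and since $l_1<l_2<\cdots$ and $u_1<u_2<\cdots$ this forces the next chosen index to be strictly larger. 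Thus each $c_X$ is processed at most $|c_X|$ times, for a total of at most $\sum_{c_X\in\C_{\TTWO}}|c_X|=d_{\C_{\TTWO}}$ iterations, giving the third summand $O(|\T|\cdot d_{\C_{\TTWO}})$; the same finiteness also secures termination, while correctness is already Theorem~\ref{thm:TTP_correctness}.

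\emph{Main obstacle.} I expect the delicate point to be the bookkeeping inside Phase 3 rather than the three-way split itself. Because \texttt{rule}-$f$ may raise \emph{every} $f(T)$ at once, a careless rescan of all $\TTWO$-constraints each iteration would threaten the budget; the two safeguards are (i) the intersection preprocessing that enforces $|\C_{\TTWO}|\le|\T|$, so a full rescan is only $O(|\T|)$, and (ii) maintaining per-constraint pointers into the nominally ordered disjuncts that advance only upward, so that locating each successive $i^*$ is amortized $O(1)$ and the total pointer movement is $O(d_{\C_{\TTWO}})$. Summing the three phases yields exactly $O\big(|\T|\cdot|\C_{\TONE}| + |\C_{\TTWO}|\cdot(|\C_{\TONE}|+|\T|\cdot\log|\T|) + |\T|\cdot d_{\C_{\TTWO}}\big)$.
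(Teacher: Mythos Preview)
Your proposal is correct and follows essentially the same three-phase decomposition as the paper's own proof: BF/BF-VI for $\N_{\TONE}$, $|\C_{\TTWO}|$ Dijkstra runs on the reweighted network for the distance matrix, and at most $d_{\C_{\TTWO}}$ while-loop iterations at $O(|\T|)$ each. The only difference is that you make explicit two bookkeeping points the paper leaves implicit---merging $\TTWO$-constraints on the same variable so that $|\C_{\TTWO}|\le|\T|$, and the monotone per-constraint disjunct pointers---which are exactly what is needed to justify the paper's unproved claim that ``each check can be done in $O(|\T|)$ time''.
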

\begin{proof}
Solving the \STP $\N_{\TONE} = (\T,\C_{\TONE})$ with BF-VI takes $O(|\T|\cdot |\C_{\TONE}|)$ time (cfr. Theorem~\ref{thm:main_stn}).
Computing the shortest paths distance matrix $\{\delta^{\hat{\varphi}_\N}_X(T)\}_{T\in\T, X\in \T}$
  takes $|\C_{\TTWO}|$ calls to the Dijkstra algorithm (one per $X\in\T$ participating in some $\TTWO$-constraint),
  so, $O(|\C_{\TTWO}|\cdot(|\C_{\TONE}| + |\T|\cdot \log|\T|))$ total time.
Checking the while-loop exit condition (\ie wether there exists some unsatisfied $c_X\in \C_{\TTWO}$),
can be done in $O(|\T|\cdot d_{\C_{\TTWO}})$ total time (because there are at most $d_{\C_{\TTWO}}$ iterations and each check can be done in $O(|\T|)$ time).
At each iteration of the while-loop, applying $\texttt{rule-}\delta$ and $\texttt{rule-}f$ to all $T\in\T$ takes $O(|\T|)$ time per iteration,
and we have at most $d_{\C_{\TTWO}}$ of them; so, notice that it takes only $O(1)$ time per single application of the rules.

Therefore, the overall time complexity of \TTP on any input $\N_{\TTWO}=(\T,\C_{\TONE}\cup\C_{\TTWO})$~is:
\[ \texttt{Time}_{\TTP}(\N_{\TTWO}) = O\big(|\T|\cdot |\C_{\TONE}| + |\C_{\TTWO}|\cdot (|\C_{\TONE}| + |\T|\cdot \log |\T|) + |\T|\cdot d_{\C_{\TTWO}}\big).\]
This is a strongly polynomial time, \ie not depending on the magnitude of the arc weights.
\end{proof}
\section{Faster Deterministic Algorithm for \RDTP{s}}\label{sect.RDTP_ALGO}
With our brand new $\TTWO$DTPs algorithm in mind, let us now focus on solving \RDTP{s} fastly.
Given an input \RDTP $\N = (\T,\C_{\TONE}\cup \C_{\TTWO} \cup \C_{\TTHREE})$,
we firstly solve the $\TTWO$DTP $\N_{\TTWO} = (\T,\C_{\TONE}\cup \C_{\TTWO})$ with \TTP (cfr. Section~\ref{sect.Type2-TP_Algo}).
If $\N_{\TTWO}$ is already inconsistent, we're done as $\N$ is too.
Otherwise, the key idea is that of checking the consistency of all the $\TTHREE$-constraints by
making one single reduction call to the 2-SAT problem (which can be solved in linear-time~\cite{Aspvall1979}).

For this reason, the universe of boolean variables is $\{x_c\}_{c\in\C_{\TTHREE}}$, \ie we have one variable per $c\in\C_{\TTHREE}$.
Let $d',d''$ be the first and second disjunct of any given $c\in\C_{\TTHREE}$ (respectively),
the intended interpretation being that $x_c$ is $\texttt{true}$ iff $d'$ is satisfied (and $d''$ can be anything),
whereas $x_c$ is $\texttt{false}$ iff $d'$ is unsatisfied and $d''$ is satisfied.

The 2-CNF formula $\textsc{Cl}_{\N}$ is built as follows.
Basically, for each $c\in\C_{\TTHREE}$ and each disjunct $d$ of $c$, we enforce the binding requirement of
satisfying all the temporal constraints in $\{d\}\cup\C_{\TONE}\cup \C_{\TTWO}$, and we check whether this implies that some other
disjunct $\tilde{d}$ of any other $\TTHREE$-constraint $\tilde{c}\neq c$ becomes unsatisfiable as a consequence.
More precisely, we check whether satisfying $\{d\}\cup\C_{\TONE}\cup \C_{\TTWO}$ implies that some weight $\tilde{u}$ must become a \emph{strict} lower-bound for
the scheduling value of some $\tilde{X}\in\T$ that appears in some other $\TTHREE$-disjunct $\tilde{d}=(\tilde{l}\leq \tilde{X}\leq \tilde{u})$.
This is formalized in Definition~\ref{def:clause} (below).
If that is the case, a binary clause asserting the above
implication\footnote{Here, recall the rule of material implication $p\rightarrow q\leftrightarrow \neg p \vee q$.} is added to $\textsc{Cl}_{\N}$.
Let us formally describe the details of this construction.
\begin{mydefinition}\label{def:clause}
Given any \RDTP $\N = (\T,\C_{\TONE}\cup \C_{\TTWO} \cup \C_{\TTHREE})$, initially $\textsc{Cl}_{\N}$ is an empty set of binary clauses.
For each $\TTHREE$-constraint of $\N$, \eg for each $c=d'_c \vee d''_c\in \C_{\TTHREE}$ where
$d'_c=(l_1\leq X_i\leq u_1)$ and $d''_c=(l_2\leq X_j\leq u_2)$, some $i<j$, $\textsc{Cl}_{\N}$ is populated as~follows:
\begin{enumerate}
\item Consider the $\TTWO$DTP $\N[d'_c]_{\TTWO}$ in which $d'_c$ is added to $\C_{\TONE}$ as a pair of $\TONE$-constraints, \ie
\begin{align*}
  \N[d'_c]_{\TTWO}\triangleq \Big( \T\cup\{z\}, \big(\C_{\TONE} & \cup \{ (z-X_i\leq -l_1), (X_i-z\leq u_1) \} \\
    &\cup \{z-T\leq 0\mid T\in\T\}\big) \; \cup \; \C_{\TTWO} \Big).
\end{align*}
If $\N[d'_c]_{\TTWO}$ is consistent, let $ \hat\varphi[d'_c]$ be its least feasible schedule; otherwise, add the unary clause $\neg x_c$ to $\textsc{Cl}_{\N}$.
For each $\tilde{c}\neq c$ in $\C_{\TTHREE}$, \eg $\tilde{c}=(\tilde{l}_1\leq X_{\tilde{i}}\leq \tilde{u}_1) \vee (\tilde{l}_2\leq X_{\tilde{j}}\leq \tilde{u}_2)\in\C_{\TTHREE}$,
\begin{itemize}
\item if $ \hat\varphi[d'_c](X_{\tilde{i}})>\tilde{u}_1$ then add the implication
  $x_c\Rightarrow \neg x_{\tilde{c}}$ (\ie clause $\neg x_c \vee \neg x_{\tilde{c}}$) to $\textsc{Cl}_{\N}$;
\item if $ \hat\varphi[d'_c](X_{\tilde{j}})>\tilde{u}_2$ then add the implication
  $x_c\Rightarrow x_{\tilde{c}}$ (\ie clause $\neg x_c \vee x_{\tilde{c}}$) to $\textsc{Cl}_{\N}$.
\end{itemize}
\item Consider the $\TTWO$DTP $\N[d''_c]_{\TTWO}$ in which $d''_c$ is added to $\C_{\TONE}$ (similarly as above).
If $\N[d''_c]_{\TTWO}$ is consistent, let $ \hat\varphi[d''_c]$ be its least feasible schedule; otherwise, add the unary clause $x_c$ to $\textsc{Cl}_{\N}$.
Again, for each $\TTHREE$-constraint $\tilde{c}\neq c$ of $\N$, \eg $\tilde{c}=(\tilde{l}_1\leq X_{\tilde{i}}\leq \tilde{u}_1) \vee (\tilde{l}_2\leq X_{\tilde{j}}\leq \tilde{u}_2)$:
 if $ \hat\varphi[d''_c](X_{\tilde{i}})>\tilde{u}_1$ then add the implication $\neg x_c\Rightarrow \neg x_{\tilde{c}}$ (\ie clause $x_c \vee \neg x_{\tilde{c}}$) to $\textsc{Cl}_{\N}$;
and if $ \hat\varphi[d''_c](X_{\tilde{j}})>\tilde{u}_2$ then add the implication $\neg x_c\Rightarrow x_{\tilde{c}}$ (\ie the clause $x_c \vee x_{\tilde{c}}$) instead.
\end{enumerate}
\end{mydefinition}
So, if the 2-SAT problem instance $\textsc{Cl}_{\N}$ is unsatisfiable, the input \RDTP $\N$ is inconsistent.
Otherwise, for every $c=d'\vee d''\in\C_{\TTHREE}$ we get at least one feasible $\TTWO$DTP:
either $\N[d'_c]_{\TTWO}$, which is
 related to the first disjunct $\{d'\}\cup\C_{\TONE}\cup \C_{\TTWO}$; or $\N[d''_c]_{\TTWO}$,
  which is related to the second $\{d''\}\cup\C_{\TONE}\cup \C_{\TTWO}$
(according to whether $x_c$ is $\texttt{true}$ or not in the satisfying assignment of $\textsc{Cl}_{\N}$).
Then we compute the pointwise-maximum schedule taken among all of~those.~Formally,
\begin{mydefinition}\label{def:max_sched}
Let $\phi:\{x_c\}_{c\in\C_{\TTHREE}}\rightarrow\{\texttt{true}, \texttt{false}\}$ be any satisfying assignment of $\textsc{Cl}_{\N}$.

For every $c=d'_c \vee d''_c\in \C_{\TTHREE}$, let us define:
\[ d^\phi_c \triangleq \left\{
\begin{array}{ll}
d'_c, & \text{ if } \phi(x_c)=\texttt{true}; \\
d''_c, &  \text{ otherwise.}
\end{array}
\right.\;\;\;\;\;\;\;\;
\text{ then, } \;\;\;\;\;\;\;\; \check{\varphi}_\N(T) \triangleq \max_{c\in\C_{\TTHREE}} \hat\varphi[d^\phi_c](T),\;\; \forall T\in\T,\]
where $\hat\varphi[d^\phi_c]$ denotes the least feasible schedule of the consistent $\TTWO$DTP $\N[d^\phi_c]_{\TTWO}$.
\end{mydefinition}
The above pointwise-maximum schedule $\check{\varphi}_\N$ turns out to be feasible for the input RDTP~$\N$, as we show next.
It is assumed we are given an \RDTP $\N$ for which $\textsc{Cl}_{\N}$ is satisfiable.
\begin{proposition}\label{prop:varphi_c1}
Given $\N$ as above, the schedule $\check{\varphi}_\N$ satisfies every $c\in \C_{\TONE}$.
\end{proposition}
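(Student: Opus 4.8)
The plan is to exploit the fact that every schedule $\hat\varphi[d^\phi_c]$ entering the pointwise-maximum is already feasible for the whole of $\C_{\TONE}$, and then to show that the pointwise-maximum of finitely many such schedules inherits this feasibility because $\TONE$-constraints are difference constraints that are closed under the $\max$ operation.

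First I would unwind Definition~\ref{def:max_sched}: for each $c\in\C_{\TTHREE}$ the schedule $\hat\varphi[d^\phi_c]$ is the least feasible schedule of the $\TTWO$DTP $\N[d^\phi_c]_{\TTWO}$, which is consistent because $\textsc{Cl}_{\N}$ is satisfiable and $\phi$ respects the unary clauses generated in Definition~\ref{def:clause}. In particular $\hat\varphi[d^\phi_c]$ is feasible for $\N[d^\phi_c]_{\TTWO}$; and since the constraint set of $\N[d^\phi_c]_{\TTWO}$ contains $\C_{\TONE}$ as a subset, each $\hat\varphi[d^\phi_c]$ satisfies every $\TONE$-constraint of $\C_{\TONE}$.

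The core step is then a pointwise-maximum closure argument. I would fix an arbitrary $(Y-X\leq w_{X,Y})\in\C_{\TONE}$. By definition $\check\varphi_\N(Y)=\max_{c\in\C_{\TTHREE}}\hat\varphi[d^\phi_c](Y)$, so this maximum is attained at some $c^*\in\C_{\TTHREE}$, i.e. $\check\varphi_\N(Y)=\hat\varphi[d^\phi_{c^*}](Y)$. Since $\check\varphi_\N(X)\geq\hat\varphi[d^\phi_{c^*}](X)$ (the maximum dominates each term), I would chain
\[\check\varphi_\N(Y)-\check\varphi_\N(X)\leq \hat\varphi[d^\phi_{c^*}](Y)-\hat\varphi[d^\phi_{c^*}](X)\leq w_{X,Y},\]
where the last inequality uses that $\hat\varphi[d^\phi_{c^*}]$ satisfies the chosen $\TONE$-constraint (by the previous paragraph). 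As the constraint was arbitrary, $\check\varphi_\N$ satisfies all of $\C_{\TONE}$.

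I do not expect a genuine obstacle here; the one subtlety worth flagging is that the background section recorded closure of $\TONE$-constraints under the pointwise-\emph{minimum}, whereas here we need closure under the pointwise-\emph{maximum}, and these are not interchangeable for arbitrary constraints. For a difference constraint $s(Y)-s(X)\leq w$, however, the same one-line argument works in both directions: it suffices that the value at the ``$Y$'' endpoint is dominated by whichever schedule realizes the maximum there, while passing to the maximum at the ``$X$'' endpoint can only enlarge $\check\varphi_\N(X)$, which only helps the inequality. I would also note that $\C_{\TTHREE}\neq\emptyset$ may be assumed throughout this section (otherwise $\N$ is a pure $\TTWO$DTP already handled by \TTP), so the maximum defining $\check\varphi_\N$ ranges over a nonempty finite index set and is well defined.
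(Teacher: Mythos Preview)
Your argument is correct and matches the paper's proof essentially line for line: you pick a $c^*$ attaining the maximum at $Y$, use $\check\varphi_\N(X)\geq\hat\varphi[d^\phi_{c^*}](X)$, and conclude via the feasibility of $\hat\varphi[d^\phi_{c^*}]$ for $\C_{\TONE}$. Your added remarks on max-closure and the non-emptiness of $\C_{\TTHREE}$ are sound side observations that the paper leaves implicit.
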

\begin{proof}
Let $c_{\TONE}=(Y-X\leq w_{X,Y})\in\C_{\TONE}$ be any $\TONE$-constraint, some $X,Y\in\T$ and $w\in\reali$.
Pick any $c_Y^* \in \arg\max_{c\in\C_{\TTHREE}} \hat\varphi[d^\phi_c](Y)$.
Clearly, $\max_{c\in\C_{\TTHREE}} \hat\varphi[d^\phi_c](X) \geq \hat\varphi[d^\phi_{c_Y^*}](X)$. Therefore:
\begin{align*}
\check{\varphi}_\N(Y) - \check{\varphi}_\N(X) &= \max_{c\in\C_{\TTHREE}} \hat\varphi[d^\phi_c](Y) - \max_{c\in\C_{\TTHREE}} \hat\varphi[d^\phi_c](X) \\
	 	       &\leq \hat\varphi[d^\phi_{c_Y^*}](Y) - \hat\varphi[d^\phi_{c_Y^*}](X) \leq w,
\end{align*}
where the very last inequality holds because $\hat\varphi[d^\phi_{c_Y^*}]$ is feasible for $(\T, \C_{\TONE})$.
So, $\check{\varphi}_\N$ satisfies~$c_{\TONE}$.
\end{proof}

\begin{proposition}\label{prop:varphi_c2}
Given $\N$ as above, the schedule $\check{\varphi}_\N$ satisfies every $c\in \C_{\TTWO}$.
\end{proposition}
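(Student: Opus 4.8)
The plan is to exploit the fact that a $\TTWO$-constraint restricts only a \emph{single} time-point, which makes the argument for $\check{\varphi}_\N$ considerably simpler than the one just used for $\TONE$-constraints in Proposition~\ref{prop:varphi_c1}. Fix an arbitrary $\TTWO$-constraint $c_X = \bigvee_{i=1}^{k}(l_i\leq X\leq u_i)\in\C_{\TTWO}$; what must be shown is that the scheduling value $\check{\varphi}_\N(X)$ lands inside at least one of the intervals $[l_i,u_i]$.

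First I would observe that, since $\C_{\TTHREE}$ is finite (and nonempty, as otherwise $\N$ collapses to a $\TTWO$DTP already handled directly by \TTP), the maximum defining $\check{\varphi}_\N(X) = \max_{c\in\C_{\TTHREE}} \hat\varphi[d^\phi_c](X)$ is actually \emph{attained}: there is some $c^\ast\in\C_{\TTHREE}$ with $\check{\varphi}_\N(X) = \hat\varphi[d^\phi_{c^\ast}](X)$. Next I would invoke Definition~\ref{def:max_sched}, recalling that $\hat\varphi[d^\phi_{c^\ast}]$ is the least feasible schedule of the consistent $\TTWO$DTP $\N[d^\phi_{c^\ast}]_{\TTWO}$, whose constraint set contains the whole of $\C_{\TTWO}$, and in particular $c_X$. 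Feasibility of $\hat\varphi[d^\phi_{c^\ast}]$ therefore forces it to satisfy $c_X$, i.e. $\hat\varphi[d^\phi_{c^\ast}](X)\in[l_i,u_i]$ for some index $i\in[1,k]$. Combining the two observations yields $\check{\varphi}_\N(X) = \hat\varphi[d^\phi_{c^\ast}](X)\in[l_i,u_i]$, so $\check{\varphi}_\N$ satisfies $c_X$; since $c_X$ was arbitrary, this establishes the claim.

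I do not anticipate any genuine obstacle here. The only thing to guard against is the temptation to imitate the $\TONE$-argument and compare values of $\check{\varphi}_\N$ at two distinct time-points: that is precisely what must be \emph{avoided}. Because $c_X$ speaks about the single variable $X$, the entire proof hinges on the elementary fact that a pointwise maximum over finitely many schedules equals, coordinate by coordinate, one of those schedules' values at that coordinate. Hence no ``mixing'' of two distinct feasible schedules occurs at $X$, and the interval membership enjoyed individually by each $\hat\varphi[d^\phi_c]$ is inherited verbatim by the maximum, without any risk of the combined value falling into a gap between consecutive intervals.
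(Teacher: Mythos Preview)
Your proof is correct and follows essentially the same approach as the paper's own: pick a $c^\ast\in\arg\max_{c\in\C_{\TTHREE}}\hat\varphi[d^\phi_c](X)$, use feasibility of $\hat\varphi[d^\phi_{c^\ast}]$ for $\N[d^\phi_{c^\ast}]_{\TTWO}$ (hence for $\C_{\TTWO}$) to place $\hat\varphi[d^\phi_{c^\ast}](X)$ in some interval $[l_i,u_i]$, and conclude since $\check\varphi_\N(X)=\hat\varphi[d^\phi_{c^\ast}](X)$. Your added remarks on the single-variable nature of $\TTWO$-constraints and the attainment of the max are sound commentary but not a departure from the paper's argument.
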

\begin{proof}
Let $c_{\TTWO}=\bigvee_{i=1}^{k}(l_i\leq X\leq u_i)\in \C_{\TTWO}$ be any $\TTWO$-constraint, some $X\in \T$, $l_i,u_i\in \reali$.
Pick any $c_X^* \in \arg\max_{c\in\C_{\TTHREE}} \hat\varphi[d^\phi_c](X)$.
By definition $\hat\varphi[d^\phi_{c_X^*}]$ is a feasible schedule of $\N[d^\phi_{c_X^*}]_{\TTWO}$, thus it is feasible for $(\T, \C_{\TTWO})$ too. Therefore,
\[ l_q\leq \hat\varphi[d^\phi_{c_X^*}](X)\leq u_q, \text{ for some } q\in \{1, \ldots, k\}.\]
Since, $\check{\varphi}_\N(X)=\varphi[d^\phi_{c_X^*}](X)$, then $\check{\varphi}_\N(X)\in [l_q,u_q]$ for the same $q$. So, $\check{\varphi}_\N$ satisfies $c_{\TTWO}$.
\end{proof}

\begin{proposition}\label{prop:varphi_c3}
Given $\N$ as above, the schedule $\check{\varphi}_\N$ satisfies every $c\in \C_{\TTHREE}$.
\end{proposition}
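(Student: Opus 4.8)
The plan is to fix an arbitrary $\TTHREE$-constraint $c = d'_c \vee d''_c \in \C_{\TTHREE}$, say with $d'_c = (l_1 \leq X_i \leq u_1)$ and $d''_c = (l_2 \leq X_j \leq u_2)$ for some $i < j$, and to argue that the disjunct selected by the satisfying assignment $\phi$ is in fact satisfied by $\check\varphi_\N$ at the relevant time-point. By symmetry it suffices to treat the case $\phi(x_c) = \texttt{true}$, so that $d^\phi_c = d'_c$; I would then establish $l_1 \leq \check\varphi_\N(X_i) \leq u_1$, which makes $\check\varphi_\N$ satisfy the first disjunct of $c$ and hence $c$ itself. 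The case $\phi(x_c) = \texttt{false}$ is entirely analogous, swapping the roles of $(X_i, u_1)$ with $(X_j, u_2)$ and reversing the polarity of the clauses accordingly.

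For the \emph{lower bound} I would simply recall that $\hat\varphi[d^\phi_c]$ is the least feasible schedule of the consistent $\TTWO$DTP $\N[d'_c]_{\TTWO}$, whose $\TONE$-constraints include the pair encoding $d'_c$; in particular $\hat\varphi[d'_c](X_i) \geq l_1$. Since $\check\varphi_\N(X_i) = \max_{c' \in \C_{\TTHREE}} \hat\varphi[d^\phi_{c'}](X_i) \geq \hat\varphi[d'_c](X_i)$, the inequality $\check\varphi_\N(X_i) \geq l_1$ follows immediately from the definition of the pointwise maximum.

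The \emph{upper bound} is the heart of the argument and the step I expect to cost the most care. I would argue by contradiction, assuming $\check\varphi_\N(X_i) > u_1$. Because $\hat\varphi[d'_c]$ itself respects $d'_c$ we have $\hat\varphi[d'_c](X_i) \leq u_1$, so the maximum defining $\check\varphi_\N(X_i)$ must be attained by some $c^* \neq c$ with $\hat\varphi[d^\phi_{c^*}](X_i) > u_1$. The key is to connect this back to $\textsc{Cl}_{\N}$: when Definition~\ref{def:clause} processes $c^*$ with its selected disjunct $d^\phi_{c^*}$ and scans the \emph{other} constraint $c$, the time-point $X_i$ is the first time-point of $c$ (since $i < j$) with upper bound $u_1$, so the strict test $\hat\varphi[d^\phi_{c^*}](X_i) > u_1$ fires and forces a clause into $\textsc{Cl}_{\N}$ that propagates $x_c$ to $\texttt{false}$. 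Concretely, if $\phi(x_{c^*}) = \texttt{true}$ then $d^\phi_{c^*} = d'_{c^*}$ and the generated clause is $x_{c^*} \Rightarrow \neg x_c$, whereas if $\phi(x_{c^*}) = \texttt{false}$ then $d^\phi_{c^*} = d''_{c^*}$ and the generated clause is $\neg x_{c^*} \Rightarrow \neg x_c$; in either case, since $\phi$ satisfies $\textsc{Cl}_{\N}$ and makes the antecedent true, it would force $\phi(x_c) = \texttt{false}$, contradicting $\phi(x_c) = \texttt{true}$.

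Thus $\check\varphi_\N(X_i) \leq u_1$, and together with the lower bound this gives $\check\varphi_\N(X_i) \in [l_1, u_1]$, so the first disjunct $d'_c$ is satisfied. The main obstacle, as hinted above, is essentially careful bookkeeping: one must keep straight the dual role played by each constraint (as the \emph{processed} one $c^*$ versus the \emph{other} one $c$ in Definition~\ref{def:clause}), verify that $X_i$ is matched against the correct upper bound $u_1$ rather than $u_2$ (this is exactly where the convention $i < j$ and the first/second-disjunct ordering is used), and track the precise polarity of each implication so that $\phi$ is genuinely contradicted. Once these correspondences are pinned down, the symmetric case $\phi(x_c) = \texttt{false}$ yields $\check\varphi_\N(X_j) \in [l_2, u_2]$ by the very same reasoning applied to the second disjunct — there an overshoot at $X_j$ triggers the clauses $x_{c^*} \Rightarrow x_c$ or $\neg x_{c^*} \Rightarrow x_c$, now forcing $\phi(x_c) = \texttt{true}$ against the hypothesis — which completes the proof.
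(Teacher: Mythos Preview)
Your argument is correct and follows essentially the same route as the paper's proof: assume w.l.o.g.\ $\phi(x_c)=\texttt{true}$, get the lower bound for free from $\hat\varphi[d'_c]$, and derive the upper bound by contradiction via an argmax constraint $c^*\neq c$ whose overshoot at $X_i$ triggers a clause in $\textsc{Cl}_{\N}$ forcing $\phi(x_c)=\texttt{false}$. Your version is in fact a bit more explicit than the paper's about the clause polarities and the first/second-disjunct bookkeeping, but the underlying idea is identical.
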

\begin{proof}
Let $c_{\TTHREE}=(l_1\leq X\leq u_1) \vee (l_2\leq Y\leq u_2)\in \C_{\TTHREE}$ be any $\TTHREE$-constraint,
  some $X,Y\in \T$, $X<Y$ and $l_1,l_2,u_1,u_2\in \reali$.
Assume w.l.o.g. $\phi(x_{c_{\TTHREE}})=\texttt{true}$. Then, $l_1\leq \hat\varphi[d^\phi_{c_{\TTHREE}}](X)\leq u_1$.

If ${c_{\TTHREE}} \in \arg\max_{c\in\C_{\TTHREE}} \hat\varphi[d^\phi_c](X)$, then
  $\check{\varphi}_\N(X)= \hat\varphi[d^\phi_{c_{\TTHREE}}](X)\in [l_1,u_1]$; so, $\check{\varphi}_\N$ would satisfy~$c_{\TTHREE}$.
Otherwise, ${c_{\TTHREE}} \not\in \arg\max_{c\in\C_{\TTHREE}} \hat\varphi[d^\phi_c](X)$, and assume $\check{\varphi}_\N(X)\not\in [l_1,u_1]$ towards a contradiction.
Pick any $c_X^* \in \arg\max_{c\in\C_{\TTHREE}} \hat\varphi[d^\phi_c](X)$. All these hypotheses combined:
\[\check{\varphi}_\N(X)= \hat\varphi[d^\phi_{c_X^*}](X) > u_1.\]
Therefore, $\phi$ must satisfy either $p\Rightarrow \neg x_{c_{\TTHREE}}$ or $\neg p\Rightarrow \neg x_{c_{\TTHREE}}$,
for some boolean variable $p$ (where the actual case depends on the actual value of $d^\phi_{c_X^*}$).
Since $\phi$ satisfies either $p$ or $\neg p$, then $\phi$ must satisfy $\neg x_{c_{\TTHREE}}$; \ie $\phi(x_{c_{\TTHREE}})=\texttt{false}$.
This is absurd, as we assumed $\phi(x_{c_{\TTHREE}})=\texttt{true}$.

The proof of the other case, in which $\phi(x_{c_{\TTHREE}})=\texttt{false}$ is initially assumed, is symmetric.
So, $\check{\varphi}_\N(X)$ satisfies $c_{\TTHREE}$.
\end{proof}
Let us mention that our algorithm is called \RDTPc, basically, it aims at computing $\hat\varphi_\N$ as above;
  if it fails in that (either because $\N_{\TTWO}$ is already inconsistent
    or $\textsc{Cl}_{\N}$ is unsatisfiable), it recognizes the input \RDTP $\N$ as inconsistent.
Now, we can prove this is correct and fast.
\begin{mytheorem}\label{thm:varphi_feasible}
  \RDTPc is correct, \ie on any RDTN $\N = (\T,\C_{\TONE}\cup \C_{\TTWO} \cup \C_{\TTHREE})$,
    it returns a feasible schedule $\hat\varphi_\N:\T\rightarrow\reali$, if $\N$ is consistent;
    otherwise, $\N$ is recognized as inconsistent.
\end{mytheorem}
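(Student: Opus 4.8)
The plan is to prove the two directions separately: \emph{soundness} (whenever \RDTPc returns a schedule, that schedule is feasible for $\N$) and \emph{completeness} (whenever $\N$ is consistent, \RDTPc returns a schedule rather than declaring inconsistency). Since \RDTPc reports inconsistency in exactly two situations---when \TTP finds $\N_{\TTWO}=(\T,\C_{\TONE}\cup\C_{\TTWO})$ inconsistent, or when the $2$-SAT instance $\textsc{Cl}_{\N}$ is unsatisfiable---completeness reduces to ruling out both situations under the hypothesis that $\N$ admits a feasible schedule $s$.

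For soundness, assume the algorithm reaches its final step, so $\N_{\TTWO}$ is consistent and $\phi$ is a satisfying assignment of $\textsc{Cl}_{\N}$. First I would verify that the schedule $\check{\varphi}_\N$ of Definition~\ref{def:max_sched} is well defined, i.e.\ that $\N[d^\phi_c]_{\TTWO}$ is consistent for every $c\in\C_{\TTHREE}$; this is guaranteed by the unary clauses of Definition~\ref{def:clause}, since if $\N[d'_c]_{\TTWO}$ were inconsistent then $\neg x_c\in\textsc{Cl}_{\N}$ forces $\phi(x_c)=\texttt{false}$ and hence $d^\phi_c=d''_c$, whose $\TTWO$DTP is then necessarily consistent (otherwise $x_c$ would also be a unary clause and $\phi$ could not satisfy both). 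Granting well-definedness, Propositions~\ref{prop:varphi_c1}, \ref{prop:varphi_c2} and~\ref{prop:varphi_c3} establish that $\check{\varphi}_\N$ satisfies every $\TONE$-, $\TTWO$- and $\TTHREE$-constraint, so it is feasible for $\N$ and the returned schedule is correct.

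For completeness I would argue by constructing a satisfying assignment of $\textsc{Cl}_{\N}$ directly from $s$. The first abort is immediate: since $\C_{\TONE}\cup\C_{\TTWO}\subseteq\C$, the schedule $s$ already witnesses consistency of $\N_{\TTWO}$, so by Theorem~\ref{thm:TTP_correctness} \TTP does not declare inconsistency. For the second abort, set $\phi(x_c)=\texttt{true}$ precisely when $s$ satisfies the first disjunct $d'_c$ of $c$ (so when $\phi(x_c)=\texttt{false}$ the schedule $s$ must satisfy $d''_c$, as $s$ satisfies at least one disjunct of each $c\in\C_{\TTHREE}$). The key lever is the least-feasible-schedule property: when $\phi(x_c)=\texttt{true}$, i.e.\ $s$ satisfies $d'_c$, then $s$ is a feasible schedule of $\N[d'_c]_{\TTWO}$ and therefore dominates its least feasible schedule pointwise, $s\geq\hat{\varphi}[d'_c]$; consequently a clause added because $\hat{\varphi}[d'_c](X_{\tilde{i}})>\tilde{u}_1$ propagates to $s(X_{\tilde{i}})\geq\hat{\varphi}[d'_c](X_{\tilde{i}})>\tilde{u}_1$, so $s$ violates the first disjunct of $\tilde{c}$, whence $\phi(x_{\tilde{c}})=\texttt{false}$---exactly as the clause $x_c\Rightarrow\neg x_{\tilde{c}}$ demands. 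Running the identical argument through the four binary clause shapes of Definition~\ref{def:clause} (those generated by $d'_c$ and, symmetrically, by $d''_c$), and discharging the two unary clauses by noting that inconsistency of the relevant $\N[d_c]_{\TTWO}$ would contradict $s$ satisfying that disjunct, shows that $\phi$ satisfies $\textsc{Cl}_{\N}$; hence it is satisfiable and the second abort never fires.

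I expect the main obstacle to be making the pointwise domination $s\geq\hat{\varphi}[d'_c]$ rigorous. The least feasible schedule of Theorem~\ref{thm:main_stn} is least only among \emph{non-negative} schedules anchored at the zero time-point $z$, so before invoking it I must normalize $s$ to be comparable: extend it by $s(z)=0$ and appeal to the standard time-origin convention under which scheduling values are non-negative---the same convention that makes the reductions of Section~\ref{sect.Type2-TP_Algo} and Definition~\ref{def:clause} faithful to the \emph{absolute} interval bounds of the $\TTWO$- and $\TTHREE$-constraints. Once this normalization is in place, the domination is a direct consequence of the least-feasible-schedule property, and the surrounding case analysis is routine bookkeeping mirroring, in reverse, the implications hard-wired into $\textsc{Cl}_{\N}$ by Definition~\ref{def:clause}.
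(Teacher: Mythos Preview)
Your proposal is correct and follows essentially the same two-direction strategy as the paper: soundness via Propositions~\ref{prop:varphi_c1}--\ref{prop:varphi_c3}, and completeness by exhibiting a satisfying assignment of $\textsc{Cl}_{\N}$ from any feasible schedule $s$. The paper's own proof leaves the completeness direction to ``the reader can check''; you actually carry out that check (defining $\phi(x_c)$ from which disjunct $s$ satisfies, and using the pointwise domination $s\geq\hat{\varphi}[d^\phi_c]$ to discharge each clause shape), and you also make explicit the well-definedness of $\check{\varphi}_\N$ and the non-negativity normalization of $s$---both points the paper passes over silently.
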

\begin{proof}
Recall that $\N$ is recognized as inconsistent only if $(\T,\C_{\TONE}\cup\C_{\TTWO})$ is already inconsistent
  or if the 2-SAT problem instance $\textsc{Cl}_{\N}$ is unsatisfiable.
In the former case, since $(\T,\C_{\TONE}\cup\C_{\TTWO})$ is inconsistent, so it is $\N$.
In the latter, by construction of $\textsc{Cl}_{\N}$,
it is not possible to satisfy all the constraints in
  $\C_{\TONE}\cup\C_{\TTWO}\cup \C_{\TTHREE}$ (otherwise, the reader can check, it would've been possible to
  construct a satisfying assignment for $\textsc{Cl}_{\N}$, straightforwardly);
so, $\N$ is really inconsistent.

On the other side, by Propositions~\ref{prop:varphi_c1},~\ref{prop:varphi_c2}~and~\ref{prop:varphi_c3}, schedule $\hat\varphi_\N$ is really feasible for $\N$.
  \end{proof}

%  This completes the proof of Theorem~\ref{thm:varphi_feasible}.\end{proof}

The next result asserts that the halting time is strongly polynomial in the input size.
\begin{mytheorem}
Let \RDTPc run on any input RDTP $\N = (\T,\C_{\TONE}\cup \C_{\TTWO}\cup C_3)$.

Its always halts within time
  $O\big(|\T|\cdot |\C_{\TONE}| + |\C_{\TTWO}|\cdot (|\C_{\TONE}| + |\T|\cdot \log |\T|) + |\T|\cdot d_{\C_{\TTWO}}\cdot|\C_{\TTHREE}| + |\C_{\TTHREE}|^2\big)$.
\end{mytheorem}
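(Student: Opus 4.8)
The plan is to charge the running time of \RDTPc phase by phase, following its control flow, bounding each phase via Theorem~\ref{thm:ttp_complexity} for the $\TTWO$DTP sub-solves and via the linear-time solvability of 2-SAT (cfr.~\cite{Aspvall1979}). First I would account for the single initial call of \TTP on $\N_{\TTWO}=(\T,\C_{\TONE}\cup\C_{\TTWO})$: by Theorem~\ref{thm:ttp_complexity} this costs $O\big(|\T|\cdot|\C_{\TONE}| + |\C_{\TTWO}|\cdot(|\C_{\TONE}|+|\T|\cdot\log|\T|) + |\T|\cdot d_{\C_{\TTWO}}\big)$, contributing the first two summands of the claimed bound (the trailing $|\T|\cdot d_{\C_{\TTWO}}$ being subsumed by the third summand). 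Crucially, this call also produces, and I would keep stored, the least feasible schedule of $\N_{\TTWO}$ together with the reduced-cost distances $\delta^{\hat\varphi_\N}_{X}(\cdot)$; the whole argument hinges on this expensive precomputation being performed once and then reused by every auxiliary instance of Definition~\ref{def:clause}.

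The core of the estimate is the cost of building $\textsc{Cl}_{\N}$. For each of the $|\C_{\TTHREE}|$ constraints $c=d'_c\vee d''_c$ we solve the two auxiliary $\TTWO$DTPs $\N[d'_c]_{\TTWO}$ and $\N[d''_c]_{\TTWO}$, each differing from $\N_{\TTWO}$ only by the zero time-point $z$ and the pair of $\TONE$-bounds $(z-X\le -l)$, $(X-z\le u)$ encoding a single $\TTHREE$-disjunct on one variable $X$. I would argue that, instead of rerunning \TTP from scratch, each auxiliary instance is solved incrementally from the stored schedule: the disjunct's lower bound is enforced by one application of $\texttt{rule-}f$ (licensed by Lemmas~\ref{lem:slack}~and~\ref{lem:update}, whose reduced-cost identity keeps the stored distances valid after a monotone raise, via $\texttt{rule-}\delta$), the $\TTWO$-constraints are then re-established by one further pass of the while-loop, and the upper bound $X\le u$ is tested at the end. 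By the same accounting as in the proof of Theorem~\ref{thm:ttp_complexity}, this incremental step costs $O(|\T|\cdot d_{\C_{\TTWO}})$ per disjunct, hence $O(|\T|\cdot d_{\C_{\TTWO}}\cdot|\C_{\TTHREE}|)$ over all $2|\C_{\TTHREE}|$ disjuncts, which is the third summand. Emitting the clauses then requires, for each computed schedule, scanning all $\tilde c\in\C_{\TTHREE}$ and testing the two inequalities of Definition~\ref{def:clause}: this is $O(|\C_{\TTHREE}|)$ per disjunct, so $O(|\C_{\TTHREE}|^2)$ in total.

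It remains to bound the last two steps. The formula $\textsc{Cl}_{\N}$ has $|\C_{\TTHREE}|$ variables and $O(|\C_{\TTHREE}|^2)$ binary clauses, so testing its satisfiability and extracting a model $\phi$ takes time linear in its size, \ie $O(|\C_{\TTHREE}|^2)$, the fourth summand. Finally, forming $\check\varphi_\N$ of Definition~\ref{def:max_sched} as the pointwise maximum of the $|\C_{\TTHREE}|$ selected schedules costs $O(|\T|\cdot|\C_{\TTHREE}|)$, dominated by the third summand whenever $d_{\C_{\TTWO}}\ge 1$. Adding the four phases yields exactly the stated bound, strongly polynomial because each ingredient (BF-VI, Dijkstra, the while-loop, and 2-SAT) is.

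I expect the main obstacle to be the amortization claim of the second paragraph: proving rigorously that the one distance matrix computed from $\N_{\TONE}$ remains valid for every auxiliary instance, so that enforcing each $\TTHREE$-disjunct costs only the incremental $O(|\T|\cdot d_{\C_{\TTWO}})$ rather than a fresh distance computation. This rests on the reduced-cost invariance of Lemma~\ref{lem:slack}, which makes $\delta^{f}_{X}$ recoverable from the stored $\delta^{\hat\varphi_\N}_{X}$ by the additive correction of $\texttt{rule-}\delta$ under any monotone update of $f$. A delicate secondary point is that a $\TTHREE$-disjunct constrains a variable $X$ that need not occur in any $\TTWO$-constraint, so the one-time precomputation must also supply the single-target distances towards every variable appearing in $\C_{\TTWO}\cup\C_{\TTHREE}$ (at most $|\T|$ Dijkstra runs); I would check that these extra runs contribute only $O(|\T|\cdot|\C_{\TONE}|)$ to the first summand while their $|\T|\cdot\log|\T|$ per-run overhead stays within the $\C_{\TTWO}$/$\C_{\TTHREE}$ budgets, and that the degenerate case $d_{\C_{\TTWO}}=0$ introduces no uncharged $|\T|\cdot|\C_{\TTHREE}|$ term.
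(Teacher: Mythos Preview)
Your proposal is correct and follows essentially the same phase-by-phase accounting as the paper: the initial \TTP call via Theorem~\ref{thm:ttp_complexity}, the incremental solution of the $2|\C_{\TTHREE}|$ auxiliary $\TTWO$DTPs reusing the stored schedule and distance matrix for $O(|\T|\cdot d_{\C_{\TTWO}}\cdot|\C_{\TTHREE}|)$ total, the $O(|\C_{\TTHREE}|^2)$ clause generation and 2-SAT solve, and the $O(|\T|\cdot|\C_{\TTHREE}|)$ pointwise maximum. Your final paragraph in fact spells out the amortization argument (via Lemma~\ref{lem:slack} and $\texttt{rule-}\delta$) and the edge cases (targets from $\C_{\TTHREE}$, $d_{\C_{\TTWO}}=0$) more carefully than the paper's own proof, which simply asserts the incremental bound.
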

\begin{proof}
By Theorem~\ref{thm:ttp_complexity}, $(\T,\C_{\TONE}\cup \C_{\TTWO})$ takes
  $O\big(|\T|\cdot |\C_{\TONE}| + |\C_{\TTWO}|\cdot (|\C_{\TONE}| + |\T|\cdot \log |\T|) + |\T|\cdot d_{\C_{\TTWO}}\big)$ time to be checked.
Using that solution as an initial candidate, solving the two $\TTWO$DTPs $\N[d'_c]_{\TTWO}$ and $\N[d''_c]_{\TTWO}$, for each $c\in \C_{\TTHREE}$ where $c=d'_c \vee d''_c$,
it takes $O\big(|\T|\cdot |\C_{\TONE}| + |\C_{\TTWO}|\cdot (|\C_{\TONE}| + |\T|\cdot \log |\T|) + |\T|\cdot d_{\C_{\TTWO}}\cdot|\C_{\TTHREE}|\big)$ total time.
Next, for each $c,\tilde{c}\in \C_{\TTHREE}$ such that $\tilde{c}\neq c$,
eventually adding the corresponding clauses to $\textsc{Cl}_{\N}$ takes $O(1)$ time per clause;
so, $\textsc{Cl}_{\N}$ is built in total time $O(|\C_{\TTHREE}|^2)$.
Since $|\textsc{Cl}_{\N}|=O(|\C_{\TTHREE}|^2)$, solving the 2-SAT problem on input $\textsc{Cl}_{\N}$ takes time $O(|\C_{\TTHREE}|^2)$
  (\eg with the algorithm of~\cite{Aspvall1979}).
Finally, computing $d^\phi_c$ and $\check{\varphi}_\N$ takes $O(|\T|\cdot |\C_{\TTHREE}|)$ time.
All in, the above mentioned time complexity of \RDTPc follows.
\end{proof}

\section{\NP-completeness of Multi-Tail \& Multi-Head $\TTHREE$HyTPs}
This section enquiries the tractability frontier of RDTPs by considering \HTN{s}~\cite{CominPR17},
where the basic idea is that of blending the two models together and see what happens to the complexity of the corresponding temporal problems.
Two restricted kinds of disjunctive temporal problems, \TTTTTNC and \TTTHTNC, are both proven to be \NP-complete.
The former problem is that of deciding whether a multi-tail {$\TTHREE$}\HTN (\ie a temporal network in which the
 constraints can be modeled only by multi-tail hyperarcs and by $\TTHREE$-constraints) is consistent or not.
The latter, \TTTHTNC, is the same as the former but considers multi-head hyperarcs instead. Let us now focus on \TTTTTNC.
\begin{mytheorem}\label{Teo:npcompleteness_tail}
\TTTTTNC is \NP-complete in a strong sense, \ie even if the input $(\T, \A\cup\C_{\texttt{t}_3})$ are restricted
  to satisfy $w_A(\cdot) \in\{-1, 0, 1\}$, $|T_A|\leq 2$ for every $A\in\A$,
    and every $\TTHREE$-constraint $(l_i\leq X\leq u_i) \vee (l_j\leq Y\leq u_j)\in \C_{\texttt{t}_3}$ has all zero-valued lower/upper-bounds.
\end{mytheorem}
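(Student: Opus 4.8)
The plan is to establish membership in \NP\ directly and then to prove \NP-hardness by a polynomial, weight-bounded reduction from \textsc{3-SAT}, so that the stated severe restrictions ($w_A(\cdot)\in\{-1,0,1\}$, $|T_A|\le 2$, all $\TTHREE$-bounds zero) are already in force. For membership, I would observe that a multi-tail hyperarc $A=(T_A,h_A,w_A)$ is satisfied as soon as $s(h_A)\le s(v)+w_A(v)$ for a \emph{single} tail $v\in T_A$, and that a zero-bounded $\TTHREE$-constraint is satisfied as soon as one of its two disjuncts, each of the form $s(X)=0$, holds. Hence a nondeterministic machine guesses, for each hyperarc, one realizing tail, and for each $\TTHREE$-constraint, one of its two disjuncts; the selected constraints form an ordinary \STP\ (all chosen constraints are simple temporal differences, the equalities $s(X)=0$ being read against the origin), whose consistency is checkable in polynomial time by Bellman-Ford (Theorem~\ref{thm:main_stn}). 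The instance is consistent \iff some selection yields a consistent \STP, which places \TTTTTNC\ in \NP.

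For the hardness direction the zero-bounded $\TTHREE$-constraints specialize to disjunctions $(s(X)=0)\vee(s(Y)=0)$, and I design two gadgets keeping all weights in $\{-1,0,1\}$ and $|T_A|\le2$. \emph{Variable gadget.} For each $x_i$ I introduce time-points $X_i,\bar X_i$ together with their unit down-shifts $X_i^-,\bar X_i^-$, fixed by the $\TONE$-constraints $s(X_i^-)=s(X_i)-1$ and $s(\bar X_i^-)=s(\bar X_i)-1$ (each a pair of unit-weight difference constraints). Adding the two $\TTHREE$-constraints $(X_i=0)\vee(\bar X_i=0)$ and $(X_i^-=0)\vee(\bar X_i^-=0)$ --- the latter reading as $(X_i=1)\vee(\bar X_i=1)$ --- forces $(s(X_i),s(\bar X_i))\in\{(0,1),(1,0)\}$ and nothing else, so $X_i$ and $\bar X_i$ realize the two exact polarities of a Boolean variable. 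This is the step that circumvents the fact that difference constraints can only shift, never negate, a value.

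\emph{Clause gadget.} I represent the literal occurring in a clause by its true-indicator ($X_i$ for $x_i$, $\bar X_i$ for $\neg x_i$), lying in $\{0,1\}$ and equal to $1$ \iff the literal is true. For a clause $\ell_1\vee\ell_2\vee\ell_3$ with true-indicators $T_1,T_2,T_3$, I fold the ternary disjunction into a binary one using a fresh node $A_j$ and a single multi-tail hyperarc: the differences $s(A_j)\ge s(T_2)$, $s(A_j)\ge s(T_3)$ (zero weights) together with the multi-tail hyperarc $s(A_j)\le\max(s(T_2),s(T_3))$ (tails $\{T_2,T_3\}$, zero weights) pin $s(A_j)=\max(s(T_2),s(T_3))\in\{0,1\}$, which is $1$ \iff $T_2$ or $T_3$ is true. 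The clause is then asserted by the single $\TTHREE$-constraint $(T_1=1)\vee(A_j=1)$, expressed through unit down-shifts exactly as in the variable gadget. A routine check shows the network is consistent \iff the formula is satisfiable: a satisfying assignment induces the schedule $s(X_i)=1,s(\bar X_i)=0$ (or the reverse) and $s(A_j)=\max(s(T_2),s(T_3))$, while conversely the gadgets force every consistent schedule to describe a satisfying assignment.

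The main obstacle, and the reason the gadgets are shaped as above, is that all allowed primitives are one-sided: difference constraints and multi-tail (max) hyperarcs can shift values and take maxima but can neither negate a variable nor directly assert a lower-bound disjunction --- the latter being precisely a multi-head constraint, which is excluded here. I therefore expect the two delicate points of the write-up to be (i) verifying that the two $\TTHREE$-constraints of the variable gadget admit exactly the two complementary solutions, and (ii) checking that $s(A_j)=\max(s(T_2),s(T_3))$ is correctly forced while every weight stays in $\{-1,0,1\}$ and every interval bound stays $0$, so that \emph{strong} \NP-completeness, and not merely \NP-completeness, is obtained.
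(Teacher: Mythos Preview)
Your proposal is correct and complete, but it departs from the paper's construction in both directions of the proof.

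For membership in \NP, the paper projects a feasible (possibly real) schedule onto a pure \HTN by freezing one disjunct per $\TTHREE$-constraint, then invokes Proposition~\ref{prop:int_sched} to obtain a polynomially bounded \emph{integer} schedule that serves as certificate. Your guess-and-verify argument (select one tail per hyperarc and one disjunct per $\TTHREE$, obtaining an \STP) is equally valid and arguably more direct.

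For hardness, both reductions are from 3-SAT with variable and clause gadgets, but the roles of hyperarcs and $\TTHREE$-constraints are swapped. The paper anchors everything to an explicit origin $z$, bounds each literal node in $\{0,1\}$ by standard arcs, and then uses \emph{one} multi-tail hyperarc (``at least one of $x_i,\bar x_i$ is $\ge 1$'') together with \emph{one} $\TTHREE$-constraint (``at least one is $0$'') per variable; clauses are encoded purely by a 3-tail hyperarc $s(C_j)\le\max(s(\alpha_j),s(\beta_j),s(\gamma_j))$, which is afterwards split into 2-tail hyperarcs. Your variable gadget avoids hyperarcs entirely by introducing unit down-shifts $X_i^-,\bar X_i^-$ and using \emph{two} $\TTHREE$-constraints, while your clause gadget combines a single 2-tail hyperarc (computing $\max(T_2,T_3)$) with one $\TTHREE$-constraint. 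The trade-off: the paper uses fewer time-points per variable (two versus four) and no $\TTHREE$-constraints in clauses; your construction is natively within the $|T_A|\le 2$ bound without a splitting step, and confines hyperarcs to the clause gadgets. Both routes are equally elementary and yield strong \NP-completeness under the stated restrictions.
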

\begin{proof}
We claim that if $\H=(\T, \A\cup\C_{\texttt{t}_3})$ is an integer-weighted and consistent multi-tail {$\TTHREE$}\HTN,
 it admits an integer-valued feasible schedule $s:\T\rightarrow\{-T, \ldots, T\}$
	where $T = \sum_{A\in\A, v\in V} |w_A(v)| + \sum_{c\in\C_{\texttt{t}_3}, c=(l_1\leq X\leq u_1)\vee (l_2\leq Y\leq u_2)} (|l_1|+|u_1|+|l_2|+|u_2|)$.
Indeed let $s$ be a feasible schedule (integer-valued or not) of $\H$,
 and consider the projection \HTN $\H^s\triangleq (\T, \A')$,
for	$ \A'\triangleq \A\cup\bigcup_{c\in\C_{\texttt{t}_3}} A^s_c$,
	where for every $c=(l_1\leq X\leq u_1)\vee (l_2\leq Y\leq u_2)\in\C_{\texttt{t}_3}$ we pick the following pair of $\TONE$-constraints:

		$ A^s_c\triangleq \left\{
					\begin{array}{ll}
						\big\{(Z-X\leq -l_1), (X-Z\leq u_1)\big\}, & \text{ if } l_1\leq s(X)\leq u_1;\\
						\big\{(Z-Y\leq -l_2), (Y-Z\leq u_2)\big\}, & \text{ otherwise.}
				\end{array}\right.$

By construction of $\H^s$, $s$ is a feasible for \HTN $\H^s$.
So, by Proposition~\ref{prop:int_sched}, $\H^s$ admits an integer-valued feasible schedule $s'$ bounded by $-T$ and $+T$ as above.
By contruction of $\H^s$, $s'$ is feasible for $\H$ too.

Moreover, any such integer-valued feasible schedule can be verified in strongly polynomial time \wrt the size of the input; hence, \TTTTTNC is in \NP.

To show that the problem is \NP-hard, we describe a reduction from 3-SAT.

Let us consider a boolean 3-CNF formula with $n\geq 1$ variables and $m\geq 1$ clauses:

$\varphi(x_1, \ldots, x_n) = \bigwedge_{i=1}^m (\alpha_i \vee \beta_i \vee \gamma_i)$,
where $C_i = (\alpha_i \vee \beta_i \vee \gamma_i)$ is the $i$-th clause of $\varphi$
and  each $\alpha_i,\beta_i,\gamma_i\in \{x_j, \overline{x}_j\mid 1\leq j\leq n\}$ is either a positive or a negative literal.

We associate to $\varphi$ a multi-tail {$\TTHREE$}\HTN $\H_{\varphi}=(\T, \A\cup \C_{\texttt{t}_3})$,
where each boolean variable $x_i$ occurring in $\varphi$ gets represented by two time-points, $x_i$ and $\overline{x}_i$.
$\T$ also contains a time-point $z$ that represents the reference
initial time-point for $\H_{\varphi}$, \ie the first time-point that has to be executed at time zero.
Moreover, for each pair $x_i$ and $\overline{x}_i$, $\H_{\varphi}$ contains:
%\begin{itemize}
%\item %\figref{FIG:Var_i}: one with multi-head $\{x_i,\overline{x}_i\}$ and tail in $z$

a multi-tail hyperarc with tails $\{x_i,\overline{x}_i\}$, both weighted $-1$, and head in $z$.
%\item

a $\TTHREE$-constraint $\big((0\leq x_i\leq 0) \vee (0\leq \overline{x}_i\leq 0)\big)\in\C_{\texttt{t}_3}$.
%\end{itemize}
If $\H_{\varphi}$ is consistent, the multi-tail hyperarc and the $\TTHREE$-constraint associated to $x,\neg x$ assures that
$\H_{\varphi}$ admits an integer feasible schedule $s$ (as we mentioned above) such that $s(x_i)$ and $s(\overline{x}_i)$
are coherently set with values in $\{0,1\}$. In this way, $s$ is forced to encode a truth assignment on the $x_i$'s.

The \HTN $\H_{\varphi}$ contains also a time-point $C_j$ for each clause $C_j$ of $\varphi$;
each $C_j$ is connected by a multi-tail hyperarc with head in $C_j$ and tails over
the literals occurring in $C_j$ and by two standard and opposite arcs with time-point $z$ as displayed in \figref{fig:gadgets} (right).
This assures that if $\H_{\varphi}$ admits a feasible schedule $s$,
then $s$ assigns scheduling time $1$ at least to one of the time-point representing the literals connected with the multi-tail hyperarc.

\figref{fig:gadgets} depicts the gadgets.
\begin{figure}[tb]
\begin{tikzpicture}[arrows=->,scale=1,node distance=2 and 2]
	\node[node, label={below:$[0]$}] (zero) {$z$};
	\node[node, above right=of zero] (nX) {$\overline{x}_i$};
	\node[node, above left=of zero] (X) {$x_i$};
 	%arcs
	%%%%%%% X's arcs %%%%%%
	\draw[] (zero) to [bend left=40] node[below] {$1$} (X);
	\draw[] (X) to [bend left=45] node[above] {$0$} (zero);
	\draw[>=,dashed, thick,sloped] (zero) to [bend right=10] node[above,xshift=.55ex,yshift=-.55ex] {\footnotesize $(0,0)$, \tiny $\TTHREE$} (X);
	\draw[dotted, thick] (X) to [bend right=15] node[below] {$-1$} (zero);
	%%%%%%% nX's arcs %%%%%
	\draw[] (zero) to [bend left=45] node[above] {$1$} (nX);
	\draw[] (nX) to [bend left=40] node[below] {$0$} (zero);
	\draw[>=, dashed, thick,sloped] (zero) to [bend left=10] node[above,yshift=-.55ex] {\footnotesize $(0,0)$, \tiny $\TTHREE$} (nX);
	\draw[dotted, thick] (nX) to [bend left=15] node[below] {$-1$} (zero);
\end{tikzpicture}\label{FIG:Var_i}
\begin{tikzpicture}[arrows=->,scale=.65,node distance=1.5 and 2]
	\node[node,label={above:$[1]$}] (one) {$C_j$};
	\node[node,below =of one] (beta) {$\beta_j$};
	\node[node,left=of beta] (alpha) {$\alpha_j$};
	\node[node,right=of beta] (gamma) {$\gamma_j$};
 	\node[node,label={below:$[0]$}, below=of beta] (zero) {$z$};
 	\coordinate (fakeL) at ($(alpha.west)+(-.3,0)$);
 	\coordinate (fakeR) at ($(gamma.east)+(.3,0)$);
	%arcs
	%%%%%%% zero/one arcs %%%%%
% 	\path [bend left=70,through point=(fakeL.south west),through point=(fakeL.north west)] (zero) edge (one);
 	\draw[>=] (zero) to [bend left=45] node[left] {$+1$} (fakeL.north);
 	\draw[] (fakeL) to [bend left=45] (one);
  	\draw[>=] (one) to [bend left=45] node[above] {$-1$} (fakeR.south);
  	\draw[] (fakeR) to [bend left=45] (zero);
% 	\draw [bend left=35,through point=(fakeR)] (one) edge (zero.south east);
	%%%%%%% alpha's arcs %%%%%%
	\draw[] (alpha) to [bend right=20] node[timeLabel,below] {} (zero);
	\draw[] (zero) to [bend right=20] (alpha);
	\draw[dashed, dotted] (alpha) to [bend right=5] (zero);
	\draw[dashed] (zero) to [bend right=5] (alpha);
	\draw[dotted, thick] (alpha) to [bend right=25] node[timeLabel,below] {$0$} (one.south west);
	%%%%%%% beta's arcs %%%%%
	\draw[] (zero) to [bend left=25] (beta);
	\draw[] (beta) to [bend left=25] (zero);
	\draw[dotted] (beta) to [bend right=10] (zero);
	\draw[dashed] (zero) to [bend right=10] (beta);
	\draw[dotted, thick] (beta) to [] node[timeLabel,right] {$0$} (one.south);
	%%%%%%% gamma's arcs %%%%%%
	\draw[] (gamma) to [bend left=20] (zero);
	\draw[] (zero) to [bend left=20] (gamma);
	\draw[dotted] (gamma) to [bend left=5] (zero);
	\draw[dashed] (zero) to [bend left=5] (gamma);
	\draw[dotted, thick] (gamma) to [bend left=25] node[timeLabel,below] {$0$} (one.south east);
\end{tikzpicture}
\caption{Variable and clause gadgets (at left and right, respectively) used in Theorem~\ref{Teo:npcompleteness_tail}.}\label{fig:gadgets}
\end{figure}
A more formal definition of $\H_{\varphi}$ is given in Appendix~A.

The reader can check that $|\T|=1+2n+m=O(m+n)$, $m_{\A}=O(m+n)$, $|\C_{\texttt{t}_3}|=O(n)$; therefore, the transformation is linearly bounded.

We next show that $\varphi$ is satisfiable if and only if $\H_{\varphi}$ is consistent.

Any truth assignment $\nu:\{x_1, \ldots, x_n\}\rightarrow \{\texttt{true}, \texttt{false}\}$
satisfying $\varphi$ can be translated into a feasible schedule $s:\T\rightarrow \interi$ of $\H_{\varphi}$ as follows.
For time-point $z$, let $s(z)=0$, and let $s(C_j)=1$ for each $j=1, \ldots, m$; then,
for each $i=1, \ldots, n$, let $s(x_i) = 1$ and $s(\overline{x}_i) = 0$ if the truth value of $x_i$, $\nu(x_i)$, is \texttt{true},
otherwise let $s(x_i) = 0$ and $s(\overline{x}_i) = 1$. It is simple to verify that,
using this schedule $s$, all the constraints comprising each single gadget are satisfied and, therefore, the network is consistent.
So, $\H_{\varphi}$ is consistent.

Vice versa, assume that $\H_{\varphi}$ is consistent. Then, it admits an integer-valued feasible schedule $s$ (as we mentioned above).
After the translation $s(v)\triangleq s(v) - s(z)$, we can assume that $s(z)=0$.
Hence, $s(C_j) = 1$ for each $j=1,\ldots, m$, as enforced by the two standard arcs incident at $C_j$ in the clause gadget,
and $\{s(x_i),s(\overline{x}_i)\} = \{0,1\}$ for each $i=1,\ldots, n$, as enforced by the constraints comprising the variable gadgets.
Therefore, the feasible schedule $s$ can be translated into a truth assignment $\nu:\{x_1, \ldots, x_n\}\rightarrow \{\texttt{true}, \texttt{false}\}$
defined by $\nu(x_i)=\texttt{true}$ if $s(x_i)=1$ (and $s(\overline{x}_i)=0$);
$\nu(x_i)=\texttt{false}$ if $s(x_i)=0$ (and $s(\overline{x}_i)=1$) for every $i = 1, \ldots , n$. So, $\varphi$ is satisfiable.
%Notice that $\nu$ satisfies $\varphi$ since, for every $j = 1, \ldots ,
%m$, it satisfies clause $C_j$ as otherwise $s$ would not satisfy the hyperarc constraint of the $C_j$ clause gadget.

To conclude, we observe that any hyperarc $A\in\A$ of $\H_{\varphi}$ has weights $w_A(\cdot)\in\{-1, 0, 1\}$, size $|A|\leq 3$,
and any $\TTHREE$-constraint $c=(l_i\leq X\leq u_i) \vee (l_j\leq Y\leq u_j)\in \C_{\texttt{t}_3}$ has zero lower and upper-bounds (\ie $l_i=u_i=l_j=u_j=0$).
Since any hyperarc with three tails can be replaced by two hyperarcs each having at most two tails,
the consistency problem remains \NP-Complete even if $|A|\leq 2$ for every $A\in A$.
\end{proof}
In order to prove that \TTTHTNC is also \NP-complete, we could proceed with an argument similar to that of Theorem~\ref{Teo:npcompleteness_tail}.
However, we also observe that the same result follows as an immediate corollary of the following inter-reducibility between the two models.
\begin{mydefinition}
A multi-tail (multi-head) RHyTN is any temporal network in which the
     constraints can be modeled only by multi-tail (multi-head) hyperarcs and by $\{\TTWO, \TTHREE\}$ disjunctive temporal constraints.

The problem of checking whether a given RHyTN is consistent is named RHyTP. Observe,
\end{mydefinition}
\begin{proposition}\label{prop:inter-reducitble-HTNs}
Multi-head and multi-tail RHyTPs are inter-reducible by means of $\log$-space, linear-time, local-replacement reductions.
Particularly, multi-head and multi-tail $\TTHREE$HyTPs are inter-reducible by such reductions. (The proof is in Appendix~A)
\end{proposition}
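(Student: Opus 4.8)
The plan is to build both directions of the reduction from a single involution on schedules, the negation map $s \mapsto -s$, exactly as in the inter-reducibility of \HHTP and \THTP established in \cite{CominPR17}. The only genuinely new thing to verify is that the two disjunctive families, $\TTWO$- and $\TTHREE$-constraints, are themselves closed under this involution, so that negating a multi-head RHyTN produces a bona fide multi-tail RHyTN (and symmetrically). If all constraint types map into their own types, the reduction of \cite{CominPR17} goes through verbatim on the hyperarcs and extends for free to the restricted disjunctive part.

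First I would record how the hyperarc constraints behave under the substitution $s(v) = -s'(v)$. Plugging it into a multi-head constraint and using the elementary identity
\[ \min_{v\in H_A}\{-s'(v) - w_A(v)\} = -\max_{v\in H_A}\{s'(v) + w_A(v)\} \]
rewrites $s(t_A) \geq \min_{v\in H_A}\{s(v) - w_A(v)\}$ as $s'(t_A) \leq \max_{v\in H_A}\{s'(v) + w_A(v)\}$, i.e. precisely a multi-tail constraint whose head is $t_A$, whose tail set is $H_A$, and whose weights are the unchanged $w_A$. Hence negation sends each multi-head hyperarc $(t_A, H_A, w_A)$ to the multi-tail hyperarc $(H_A, t_A, w_A)$, a purely local, weight-preserving relabelling; the inverse substitution reverses the roles.

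Next I would check the disjunctive constraints. Substituting $s(X) = -s'(X)$ turns a $\TTWO$-constraint $\bigvee_{i=1}^{k}(l_i\leq X\leq u_i)$ into $\bigvee_{i=1}^{k}(-u_i\leq s'(X)\leq -l_i)$, again a $\TTWO$-constraint once its intervals are re-indexed into ascending nominal order (which merely reverses them). Likewise $(l_1\leq X\leq u_1) \vee (l_2\leq Y\leq u_2)$ becomes $(-u_1\leq X\leq -l_1) \vee (-u_2\leq Y\leq -l_2)$, still of type $\TTHREE$. Thus each constraint type maps into its own type, so the image of a multi-head RHyTN is a well-formed multi-tail RHyTN, and conversely.

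Finally I would conclude by combining the three transformations: $s$ is feasible for the given multi-head RHyTN if and only if $-s$ is feasible for the transformed multi-tail RHyTN, so the two consistency questions are equivalent. Since the transformation touches each constraint independently, negating and swapping a constant number of bounds and relabelling the tail/head role of each hyperarc, it is computable in $\log$-space and linear time and is of local-replacement type. The \emph{particularly} clause is then immediate, since restricting to $\C_{\TTWO} = \emptyset$ is preserved by the reduction, so the map specialises to the $\{\TTHREE\}$-only case. I expect no serious obstacle here; the one point that needs care is the asymmetric endpoint handling (the swap $l_i \leftrightarrow -u_i$) and the consequent re-sorting required to respect the nominal-ordering convention assumed for $\TTWO$-constraints.
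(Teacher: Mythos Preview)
Your proposal is correct and follows essentially the same approach as the paper: both use the negation involution $s\mapsto -s$, reverse each hyperarc's tail/head roles with unchanged weights, and replace every interval $[l,u]$ by $[-u,-l]$ in the $\TTWO$- and $\TTHREE$-constraints, then verify constraint-by-constraint that $s$ is feasible for the original network iff $-s$ is feasible for the transformed one. The only cosmetic difference is that the paper presents the multi-tail$\to$multi-head direction explicitly (declaring the converse symmetric), whereas you start from the multi-head side; your remark about re-sorting to restore the nominal ordering on $\TTWO$-disjuncts is a detail the paper leaves implicit.
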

Therefore, by Proposition~\ref{prop:inter-reducitble-HTNs}, it follows that \TTTHTNC is also strongly \NP-complete.

\section{Pseudo-Polynomial Time Algorithm for $\texttt{t}_2$HyTPs}\label{sect:algo_type2hytp}
We end by studying multi-tail and multi-head $\texttt{t}_2$HyTNs (\ie temporal networks in which
the temporal constraints can be only $\TTWO$ disjunctive temporal constraints and either only multi-tail or multi-head hyperarcs).
It turns out that checking the corresponding temporal problems, \TTTtwoTTNC and \TTTtwoHTNC,
lies in $\textsc{NP}\cap \text{co-}\textsc{NP}$ and admits pseudo-polynomial time algorithms.
By Proposition~\ref{prop:inter-reducitble-HTNs}, it is sufficient to focus on multi-head $\texttt{t}_2$HyTPs only.
The corresponding pseudo-polynomial time algorithm is named \TTHTP, and described below -- notice that it generalizes \TTP.
Given any integer-weighted multi-head $\texttt{t}_2$HyTPs $\H_{\texttt{t}_2} = (\T,\A\cup \C_{\texttt{t}_2})$ in input,
we firstly solve the HyTP $\H = (\T,\A)$ with the VI algorithm of Theorem~\ref{Teo:MainAlgorithms}.
If $\H$ is recognized as inconsistent, the algorithm halts. Otherwise, let $\varphi$ be the least feasible schedule of $\H$.
Then proceed as follows:

While $\exists$ some $X\in \T$ and $c_X=\bigvee_{i=1}^{k}(l_i\leq X\leq u_i)\in \C_{\texttt{t}_2}$ s.t. $\varphi(X)$ doesn't satisfy $c_X$:

If $\varphi(X)>u_k(=\max_i u_i)$, then $\H_{\texttt{t}_2}$ is recognized as inconsistent;
  otherwise, let $i^*$ be the smallest $i\in [1,k]$ such that $\varphi(X)<l_i$.
Firstly, we increase the value of $\varphi(X)$ up to $l_{i^*}$, \ie update $\varphi(X)\leftarrow l_{i^*}$.
Secondly, the VI algorithm of Theorem~\ref{Teo:MainAlgorithms} is invoked on input $(\H, \varphi)$,
so, then, $\varphi$ becomes the schedule returned by that run of VI. The process iterates so on and so forth, and
if the while-loop completes without recognizing $\H_{\texttt{t}_2}$ as inconsistent, $\varphi$ is returned.
The correctness and the time complexity are asserted below. (The proof is in Appendix~A)
\begin{mytheorem}\label{thm:htn_algo}
\TTHTP is correct, \ie running on any integer-weighted multi-head $\texttt{t}_2$HyTP $\H_{\texttt{t}_2} = (\T,\A\cup \C_{\texttt{t}_2})$,
 an integer-valued feasible schedule $\varphi:\T\rightarrow\interi$ is returned, in case $\H_{\texttt{t}_2}$ is consistent;
otherwise, $\H_{\texttt{t}_2}$ is correctly recognized as inconsistent.

Moreover, the corresponding time complexity is pseudo-polynomial, \ie
\begin{align*}\hspace{2.5ex} \texttt{Time}_{\texttt{t}_2\texttt{HyTP()}}(\H_{\texttt{t}_2}) =
   O\big(( & |\T|+|\A|) \cdot m_{\A}\cdot W_{\A,\C_{\texttt{t}_2}} \big), \\
	\text{ where } & W_{\A,\C_{\texttt{t}_2}}\triangleq \max\Big(\max_{A\in \A}\max_{h\in A} |w_A(h)|, \max_{\substack{l_j \text{ appears in any } \\
    \vee_{i=1}^{k}(l_i\leq X\leq u_i)\in \C_{\texttt{t}_2}}} l_j\Big).
\end{align*}
\end{mytheorem}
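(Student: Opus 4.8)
The plan is to follow the template of the \TTP analysis (Theorems~\ref{thm:TTP_correctness} and~\ref{thm:ttp_complexity}), replacing the STN machinery (Bellman--Ford together with Lemmas~\ref{lem:slack} and~\ref{lem:update}) with the value-iteration engine of Theorem~\ref{Teo:MainAlgorithms} and the integrality bound of Proposition~\ref{prop:int_sched}. First I would record the structural fact that legitimises the monotone, value-iteration approach for multi-head \HTN{s}: the pointwise minimum of two feasible schedules is again feasible. Indeed, for a multi-head hyperarc $A=(t_A,H_A,w_A)$ and $s=\min(s_1,s_2)$, if say $s(t_A)=s_1(t_A)$, then $s(t_A)=s_1(t_A)\ge\min_{v\in H_A}\{s_1(v)-w_A(v)\}\ge\min_{v\in H_A}\{s(v)-w_A(v)\}$, using $s\le s_1$. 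Hence every consistent multi-head \HTN{} admits a least feasible schedule, and I would argue that the VI run of Theorem~\ref{Teo:MainAlgorithms}, when seeded with a schedule $\varphi$, returns the pointwise-least feasible schedule dominating $\varphi$ (it only relaxes upward and halts at the first feasible point). This plays the same role as the least feasible schedule did in Section~\ref{sect.Type2-TP_Algo}.

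Second, for correctness I would maintain the following loop invariant, mirroring the key invariant of Theorem~\ref{thm:TTP_correctness}: after each iteration $\varphi$ is the least feasible schedule of the augmented \HTN{} $\H'$ obtained from $\H=(\T,\A)$ by adjoining the lower-bound constraints $X^{(\gamma)}\ge l^{(\gamma)}_{i^*}$ imposed so far, and every feasible schedule of $\H_{\texttt{t}_2}$ is feasible for $\H'$. The inductive step is the monotonicity argument: when $\varphi(X)$ violates $c_X=\bigvee_{i=1}^{k}(l_i\le X\le u_i)$ and $i^*$ is the smallest index with $\varphi(X)<l_{i^*}$, then for every $i<i^*$ we have $\varphi(X)\ge l_i$ and $\varphi(X)\notin[l_i,u_i]$, hence $\varphi(X)>u_i$, so $X$ has already overshot every interval below $i^*$. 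Since $\varphi$ is least feasible for $\H'$, every feasible schedule $s'$ of $\H'$ satisfies $s'(X)\ge\varphi(X)$, and to satisfy $c_X$ it must have $s'(X)\ge l_{i^*}$. Thus adjoining $X\ge l_{i^*}$ discards no solution of $\H_{\texttt{t}_2}$ and preserves the invariant, while re-running VI restores the ``least feasible'' part. From the invariant the two correctness directions follow just as in Theorem~\ref{thm:TTP_correctness}: if the loop reports inconsistency because $\varphi(X)>u_k$, then every feasible schedule of $\H_{\texttt{t}_2}$ dominates $\varphi$ and thus overshoots the last interval, so $\H_{\texttt{t}_2}$ is inconsistent; if the loop exits normally, $\varphi$ satisfies every $\texttt{t}_2$-constraint by the exit test and all of $\A$ by the invariant, so it is feasible for $\H_{\texttt{t}_2}$. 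Integrality of the returned schedule is inherited from Theorem~\ref{Teo:MainAlgorithms} together with the fact that every forced bump $l_{i^*}$ is an integer.

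Third, termination and running time. Since values only move upward and, by Proposition~\ref{prop:int_sched}, a feasible integer schedule stays within $\{-T,\dots,T\}$, the process halts. For the quoted bound I would avoid charging a full VI run to each of the (up to $d_{\C_{\texttt{t}_2}}$) iterations by running the value iteration \emph{incrementally}: each iteration seeds VI with the current $\varphi$ raised in one coordinate, so the relaxations performed across all invocations are exactly those of a single monotone sweep whose potentials climb from the least feasible schedule of $\H$ up to the final schedule. The amortized analysis behind Theorem~\ref{Teo:MainAlgorithms} charges $O((|\T|+|\A|)\cdot m_{\A})$ work per unit of total potential increase, and the total increase per node is governed precisely by $W_{\A,\C_{\texttt{t}_2}}=\max(\max_{A,h}|w_A(h)|,\ \max l_j)$, since the only external forcing comes from the lower bounds $l_j$. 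Folding in the initial VI run and the $O(|\T|\cdot d_{\C_{\texttt{t}_2}})$ cost of repeatedly scanning the $\texttt{t}_2$-constraints for violations, all terms collapse into $O\big((|\T|+|\A|)\cdot m_{\A}\cdot W_{\A,\C_{\texttt{t}_2}}\big)$.

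The step I expect to be the genuine obstacle is this final amortization: turning ``$d_{\C_{\texttt{t}_2}}$ separate VI calls'' into ``one monotone sweep of bounded total cost.'' It requires opening up the value iteration of Theorem~\ref{Teo:MainAlgorithms} and verifying that (i) re-seeding VI with a schedule already feasible for $\H$ and raised in a single coordinate costs only the relaxations actually triggered by that raise, and (ii) the cumulative number of such relaxations, summed over all iterations, is bounded by the single-sweep quantity with $W$ replaced by $W_{\A,\C_{\texttt{t}_2}}$. Everything else is a faithful lifting of the \TTP arguments from STNs to multi-head \HTN{s}.
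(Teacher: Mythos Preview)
Your proposal is correct and follows essentially the same approach as the paper's own proof: correctness is lifted from the \TTP argument via the least-feasible-schedule invariant (the paper even remarks the details are \emph{simpler} here since only the VI of Theorem~\ref{Teo:MainAlgorithms} is used), and the time bound comes from amortizing the incremental, monotonically-seeded VI calls into a single sweep with $W$ replaced by $W_{\A,\C_{\texttt{t}_2}}$. If anything you are more explicit than the paper---you spell out the pointwise-minimum closure and the loop invariant, and you correctly flag the amortization as the crux, whereas the paper dispatches it in a sentence and absorbs the $O(|\T|\cdot d_{\C_{\texttt{t}_2}})$ scanning cost via the crude bound $d_{\C_{\texttt{t}_2}}\le |\T|\cdot W_{\A,\C_{\texttt{t}_2}}$.
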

Finally, since \TTHTP is correct, it is possible to establish the following complexity result.
\begin{mytheorem}\label{thm:np_conp}
  $\TTTtwoTTNC, \TTTtwoHTNC \in \textsc{NP}\cap \text{co-}\textsc{NP}$. (The proof is in Appendix~A)
\end{mytheorem}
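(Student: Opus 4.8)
The plan is to first reduce, via Proposition~\ref{prop:inter-reducitble-HTNs}, to the multi-head case, so that it suffices to prove $\TTTtwoHTNC\in\textsc{NP}\cap\text{co-}\textsc{NP}$ and then transfer the result to $\TTTtwoTTNC$ by the same $\log$-space linear-time inter-reducibility. For the \textsc{NP} side I would first establish the bounded integer-schedule property for multi-head $\texttt{t}_2$HyTNs, namely that any consistent integer-weighted instance $\H_{\texttt{t}_2}=(\T,\A\cup\C_{\texttt{t}_2})$ admits an integer feasible schedule taking values in $\{-T,\ldots,T\}$ for $T\triangleq\sum_{A\in\A,\,v}|w_A(v)|+\sum_{\vee_i(l_i\le X\le u_i)\in\C_{\texttt{t}_2}}\sum_i(|l_i|+|u_i|)$. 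This follows by the same projection argument used in Theorem~\ref{Teo:npcompleteness_tail}: fix any feasible schedule $s$, replace each $\texttt{t}_2$-constraint by the single pair of $\TONE$-constraints $\{z-X\le -l_i,\,X-z\le u_i\}$ corresponding to the interval actually containing $s(X)$, and apply Proposition~\ref{prop:int_sched} to the resulting pure multi-head \HTN. Since $T$ is polynomial in the encoding length of the input, such a schedule is a polynomial-size certificate whose feasibility (one inequality per hyperarc-head and one interval-membership test per $\texttt{t}_2$-constraint) is checkable in polynomial time; hence $\TTTtwoHTNC\in\textsc{NP}$.

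For the co-\textsc{NP} side I would produce a succinct certificate of \emph{inconsistency}, exploiting the correctness of \TTHTP (Theorem~\ref{thm:htn_algo}) together with the negative-hypercycle machinery of Theorem~\ref{Teo:MainAlgorithms}. The idea is that \TTHTP is monotone and, at every stage, maintains the \emph{least} feasible schedule of the base \HTN $\H=(\T,\A)$ augmented with the lower-bound $\TONE$-constraints enforced so far; moreover, by the greatest-lower-bound remark already used for \TTP, each $\texttt{t}_2$-constraint $c_X=\bigvee_{i=1}^k(l_i\le X\le u_i)$ ever contributes at most one \emph{effective} lower bound $z-X\le -l_{i^*(X)}$. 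Thus, whenever the instance is inconsistent, the canonical least-feasible run terminates by detecting $\varphi(X^*)>u_k$ for some offending $c_{X^*}$, and the witness is the selection $\{(z-X\le -l_{i^*(X)})\}_{c_X\in\C_{\texttt{t}_2}}$ of these effective lower bounds together with a negative hypercycle of the multi-head \HTN obtained from $\A$ by adding these lower bounds as standard arcs and the cap $X^*-z\le u_k$. Such a hypercycle certifies, in polynomial time (Theorem~\ref{Teo:MainAlgorithms} and Appendix~A), that the least feasible schedule of the augmented network is forced strictly above $u_k$ at $X^*$, whence no schedule can simultaneously meet the selected lower bounds and the upper bound of $c_{X^*}$; this places $\TTTtwoHTNC$ in co-\textsc{NP} as well.

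The main obstacle is proving that this inconsistency certificate is \emph{sound and polynomially verifiable}, and here the delicate point is that a $\texttt{t}_2$-constraint is a disjunction of conjunctions (a union of intervals), not a single temporal inequality: one must justify that each enforced lower bound $l_{i^*(X)}$ is genuinely \emph{forced}, i.e. that $X$ has been pushed past every lower interval $[l_{i'},u_{i'}]$ with $i'<i^*(X)$, so that fixing the interval is not an avoidable bad guess. I would discharge this by an induction along the monotone value-iteration, tying each raise of $\varphi(X)$ to the least-feasible-schedule invariant established for \TTP (Lemma~\ref{lem:slack} and Lemma~\ref{lem:update}) and to the corresponding forcing in the base \HTN, so that the whole forcing chain collapses into a single negative hypercycle of polynomial size. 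Finally, the tail case $\TTTtwoTTNC\in\textsc{NP}\cap\text{co-}\textsc{NP}$ follows immediately from the head case by Proposition~\ref{prop:inter-reducitble-HTNs}, since both membership witnesses -- a bounded integer schedule, respectively a negative-hypercycle-plus-lower-bound-selection -- are preserved by the local-replacement reductions.
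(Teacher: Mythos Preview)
Your \textsc{NP} argument is essentially the paper's: project each $\texttt{t}_2$-constraint onto the single interval hit by a feasible schedule, invoke Proposition~\ref{prop:int_sched} on the resulting pure multi-head \HTN, and verify the bounded integer schedule. That part is fine.

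The co-\textsc{NP} side, however, has a genuine gap. Your proposed certificate is a selection $\{l_{i^*(X)}\}_{c_X\in\C_{\texttt{t}_2}}$ of lower bounds together with \emph{one} negative hypercycle in the network $\A\cup\{z-X\le -l_{i^*(X)}\}\cup\{X^*-z\le u_k\}$. That hypercycle only certifies the conditional statement ``if every $X$ is scheduled at or after $l_{i^*(X)}$, then $c_{X^*}$ fails''; it says nothing about why the selection itself is forced. A verifier who only sees the selection and the hypercycle cannot distinguish a legitimate forcing chain from a maliciously chosen selection that happens to create a negative hypercycle while lower intervals of some $c_X$ are in fact still available. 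You acknowledge exactly this obstacle, but the resolution you sketch --- ``the whole forcing chain collapses into a single negative hypercycle'' --- does not hold: the implication ``$\varphi(X)>u_{i-1}\Rightarrow \varphi(X)\ge l_i$'' is a consequence of the disjunctive $\texttt{t}_2$-structure, not a linear inequality, and cannot be encoded as an arc of a single hypercycle. (Lemmas~\ref{lem:slack} and~\ref{lem:update}, which you invoke, concern \STN{s} and do not help with this disjunctive step in the \HTN setting.)

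The paper closes this gap differently. Its inconsistency certificate is a \emph{sequence} $\big((\varphi_0,c_{X_0}),(\varphi_1,c_{X_1}),\ldots,(\varphi_N,c_{X_N})\big)$ with $N\le d_{\C_{\texttt{t}_2}}$, where each $\varphi_\iota$ is a \emph{certified least feasible schedule} (CLFS) of the auxiliary \HTN $\H_\iota$ obtained by adding the lower bounds enforced up to step~$\iota$; a CLFS is the pair of the schedule together with a family $\{\C_T\}_{T\in\T}$ of negative hypercycles, one per time-point, each witnessing that $\varphi_\iota(T)$ cannot be decreased. This is precisely the missing ingredient: the CLFS at step~$\iota$ proves that $\varphi_\iota(X_\iota)$ is already the minimum, hence that the intervals of $c_{X_\iota}$ below $i^*_\iota$ are genuinely ruled out, so the next lower bound is forced. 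The whole sequence has polynomial size and each CLFS is verifiable in strongly polynomial time, which yields $\TTTtwoHTNC\in\text{co-}\textsc{NP}$; the tail case then follows via Proposition~\ref{prop:inter-reducitble-HTNs}, as you say.
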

%(The proofs of Theorem~\ref{thm:htn_algo}~and~\ref{thm:np_conp} are both in Appendix~A)

%!TEX root = cstnWithDecision.tex
\section{Conclusions and Future Works}\label{sect.CFW}
A deeper combinatorial comprehension on the algorithmics of RDTPs led
 to a new elementary deterministic strongly polynomial time procedure for solving them,
  significantly improving the asymptotic running times suggested by Kumar before.
In future works we'd like to investigate further on possible generalizations/extensions of the proposed algorithms,
   aiming at covering some compatible (or even wider) subclasses of the disjunctive temporal constraints~problem.

%\subparagraph*{Acknowledgements.}
%I want to thank \dots

%% Either use bibtex (recommended),
\bibliography{bibliography}
\newpage
\appendix
\section{Appendix: Omitted Proofs.}
The appendix proceeds by offering additional missing proofs.
\begin{proof}[Proof of Proposition~\ref{prop:inter-reducitble-HTNs}]
We show the reduction from multi-tail to multi-head hypergraphs; the converse direction is symmetric.
Informally, all the arcs are reversed (so that what was multi-tail becomes multi-head),
and, contextually, the time-axis is inverted (to account for the inversion of the direction of all arcs).
Finally, all $\texttt{t}_2$ and $\texttt{t}_3$-constraints are also reversed.

Given a multi-tail R\HTN $\H=(\T,\A\cup\C_{\texttt{t}_2}\cup\C_{\texttt{t}_3})$, we associate to $\H$ a multi-head R\HTN $\H'=(\T, \A'\cup\C'_2\cup\C'_3)$
by reversing all multi-tail hyperarcs, all $\texttt{t}_2$ and $\texttt{t}_3$-constraints.
Formally,
\begin{align*} & {\A}'  \triangleq \Big\{(v, S, w)\mid (S, v, w)\in \A\Big\}, \;\;
  \C'_2\triangleq \Big\{ \bigvee_{i=1}^{k}(-u_i\leq X\leq -l_i) \mid \bigvee_{i=1}^{k}(l_i\leq X\leq u_i)\in\C_{\texttt{t}_2} \Big\}, \\
  & \C'_3\triangleq \Big\{ \big((-u_1\leq X\leq -l_1) \vee (-u_2\leq Y\leq -l_2)\big)
                              \mid \big((l_1\leq X\leq u_1) \vee (l_2\leq Y\leq u_2)\big)\in\C_{\texttt{t}_3} \Big\}.
\end{align*}

We claim that $\H$ is consistent if and only if $\H'$ is consistent.
To prove it, we note that each schedule $s$ for $\H$ can be associated, with a flip of the time direction, to the schedule $s' \triangleq -s$.
Then,  it holds that $s$ is feasible for $\H$ if and only if $s'$ is feasible for $\H'$.
Indeed, $s$ satisfies the constraint represented by an hyperarc $A=(T_A, h_A, w_A)\in\A$, \ie

$
   s(h_A)\leq \max_{v\in T_A} \{s(v) + w_A(v) \}
$,

or, equivalently, $
	-s(h_A)\geq \min_{v\in T_A}\{-s(v) - w_A(v) \}$,
if and only if $s'$ (that is, $-s$) satisfies the constraint represented by the reversed hyperarc $A'=(h_A, T_A, w_A)$, \ie if and only if:

$
   s'(h_A)\geq \min_{v\in T_A}\{s'(v)-w_{A'}(v)\}
$.

Next, $s$ satisfies a $\texttt{t}_2$-constraint $\bigvee_{i=1}^{k}(l_i\leq X\leq u_i)$ \iff
$l_i \leq s(X)\leq u_i$ holds for some $i\in [1,k]$, or equivalently, \iff $ -u_i \leq -s(X)\leq -l_i$; this happens
\iff $s'$ satisfies the constraint represented by the reversed disjunct $(-u_i\leq X\leq -l_i)$,
  \ie \iff $-u_i \leq s'(X)\leq -l_i$.

Finally, $s$ satisfies a $\texttt{t}_3$-constraint $((l_1\leq X\leq u_1) \vee (l_2\leq Y\leq u_2))$ \iff either
  $l_1 \leq s(X)\leq u_1$ or $l_2 \leq s(Y)\leq u_2$, or equivalently, either
    $-u_1 \leq -s(X)\leq -l_1$ or $-u_2 \leq -s(Y)\leq -l_2$; this happens
  \iff $s'$ satisfies the constraint represented either by the reversed disjunct $(-u_1\leq X\leq -l_1)$ or by $(-u_2\leq Y\leq -l_2)$,
    \ie \iff either
      $-u_1 \leq s'(X)\leq -l_1$ or $-u_2 \leq s'(Y)\leq -l_2$.

\end{proof}

\begin{proof}[Formal definition of $\H_{\varphi}$ in the proof of Theorem~\ref{Teo:npcompleteness_tail}]

	More formally, $\H_{\varphi}=(\T, \A\cup \C_{\texttt{t}_3})$ is:
	\begin{itemize}
		\item $\T=\{z\}\cup \{x_i \mid 1\leq i \leq n\}\cup \{\overline{x}_i\mid 1\leq i \leq n\}\cup\{C_j \mid 1\leq j\leq m\}$;
		\item $\A = \bigcup_{i=1}^n \text{Var}_i \cup \bigcup_{j=1}^m\text{Cla}_j$, where:
		\begin{itemize}
			\item $\text{Var}_i=\Big\{ (z, x_i, 1), (x_i, z, 0), (z, \overline{x}_i, 1), (\overline{x}_i, z, 0), \\
				 \big(\{x_i, \overline{x}_i\}, z, [w(x_i), w(\overline{x}_i)] = [-1, -1] \big)\Big\}$.\\
				This is for the variable gadget of $x_i$ as depicted in \figref{fig:gadgets} (left);
			\item $\text{Cla}_j=\Big\{(z, C_j, 1), (C_j, z, -1), \\
						(\{\alpha_j, \beta_j, \gamma_j\}, C_j, [ w(\alpha_j), w(\beta_j), w(\gamma_j) ] = [0,0,0] )\Big \}$.\\
				This defines the clause gadget for clause $C_j = (\alpha_i \vee \beta_i \vee \gamma_i)$ as in \figref{fig:gadgets} (right).
		\end{itemize}
		\item $\C_{\texttt{t}_3} = \bigcup_{i=1}^n \text{Var}'_i$, where:
		\begin{itemize}
			\item $\text{Var}'_i=\big\{ \big((0\leq x_i\leq 0) \vee (0\leq \overline{x}_i\leq 0)\big) \big\}$.\\
				This completes the variable gadget of $x_i$ as depicted in \figref{fig:gadgets} (left);
		\end{itemize}
	\end{itemize}

\end{proof}

\begin{proof}[Proof of $\TTTtwoHTNC, \TTTtwoTTNC \in \textsc{NP}$]
  We claim that if $\H=(\T, \A\cup\C_{\texttt{t}_2})$ is an integer-weighted and consistent multi-tail $\TTWO$\HTN,
  it admits an integer-valued feasible schedule $s:\T\rightarrow\{-T, \ldots, T\}$
  where $T = \sum_{A\in\A, v\in V} |w_A(v)| + \sum_{c\in\C_{\texttt{t}_2}, c=\bigvee_{i=1}^{k}(l_i\leq X\leq u_i)} (|l_i|+|u_i|)$.
  Indeed, let $s$ be a feasible schedule (integer-valued or not) of $\H$,
  and consider the projection~\HTN:
\begin{align*}
  \H^s & \triangleq (\T\cup\{z\}, \A'), \\
  \A' & \triangleq \A\cup\{(z-T\leq 0)\mid T\in\T\}\cup\bigcup_{c\in\C_{\texttt{t}_2}} A^s_c.
\end{align*}
  where for every $c=\bigvee_{i=1}^{k}(l_i\leq X\leq u_i)\in\C_{\texttt{t}_2}$ this pair of $\TONE$-constraints is taken:
  \[ A^s_c\triangleq
  			\big\{(z-X\leq -l_i), (X-z\leq u_i)\mid \text{ for the smallest } i\in\{1, \ldots, k\} \text{ s.t. } l_i\leq s(X)\leq u_i \big\}.\]
  By construction of $\H^s$, $s$ is a feasible for \HTN $\H^s$.
  So, by Proposition~\ref{prop:int_sched}, $\H^s$ admits an integer-valued feasible schedule $s'$ bounded as above.
  By construction of $\H^s$, $s'$ is feasible for $\H$ too.

  Any such integer-valued feasible schedule can be verified in strongly polynomial time \wrt the size of the input,
    simply by checking the actual consistency of each constraint in $\A\cup\C_{\texttt{t}_2}$; hence, \TTTtwoHTNC is in \NP.
  Thus, by Proposition~\ref{prop:inter-reducitble-HTNs}, $\TTTtwoTTNC\in \NP$.
\end{proof}

\begin{mydefinition}[Hypercycle]\label{def:neg_cyc}
We recall from~\cite{CominPR17} that a \emph{hypercycle} $\C_0$ in a \HTN $\H$ is actually
a pair $(S,\C_0)$ with $S\subseteq \T$ and $\C_0 \subseteq \A$ such that:
\begin{enumerate}
\item $S = \cup_{A\in \C_0} A$ and $S\neq\emptyset$;
\item $\forall v\in S$ there exists an unique $A\in \C_0$ such that $t_A = v$.
\end{enumerate}
Every infinite path in a cycle $(S,\C)$ contains, at least, one \textit{finite cyclic sequence} $v_i, v_{i+1}, \ldots, v_{i+p}$,
where $v_{i+p} = v_i$ is the only repeated node in the sequence.
A cycle $(S,\C_0)$ is \textit{negative} if for any finite cyclic sequence $v_1, v_{2}, \ldots, v_{p}$,
it holds that $\sum_{t=1}^{p-1} w_{a(v_t)}(v_{t+1}) < 0$, where $a(v)$ denotes the unique arc $A\in \C_0$ with $t_A = v$ as required in previous item~2.
\end{mydefinition}

\begin{mydefinition}[Certified Least Feasible Schedule (CLFS)]\label{def:cert_leastfeasible}
Given any integer-weighted multi-head \HTN $\H=(\T, \A)$,
a \emph{certified least feasible schedule (CLFS)} for $\H$ is a pair $\varphi_{\text{cert}}\triangleq \{\varphi, \F\}$,
where $\varphi:\T\rightarrow \interi$ is a feasible schedule of $\H$,
and $\F\triangleq \{\C_X\}_{X_\T}$ is a family of hypercycles of $\H$ (which works as a certificate of minimality for $\varphi$, as follows):
for every $X\in\T$, $\C_X$ is a negative hypercycle of the auxiliary \HTN $\H_X$ obtained from $\H$
just by adding one $\TONE$-constraint requiring $X$ to be scheduled strictly before time $\varphi(X)$,
\[ \H_X\triangleq \big(\T\cup\{z\}, \A\cup\{(z-T\leq 0)\mid T\in\T\}\cup\{X-z\leq \varphi(X)-1\}\big).\]
$\varphi_{\text{cert}}$ can be verified in strongly polynomial time,
because negative hypercycles can be checked so (as shown \eg in Lemma~3~of~\cite{CominPR17})
and feasiblity of $\varphi$ can be checked by inspection.
The soundness of CLFSs follows from the proof argument of Proposition~\ref{prop:int_sched}
  (\ie the idea in this proof is -- cfr. Lemma~2 in~\cite{CominPR17} -- to project the feasible HyTN over a conservative graph and
 then, in that setting, to exploit the integrality properties of potentials as prescribed \eg by the Bellman-Ford algorithm) plus
the fact that the universe of feasible schedules of any given multi-head \HTN is closed under pointwise-minimum,
  \ie given two feasible schedules $s_1,s_2$, the schedule $s(u)\triangleq \min(s_1(u), s_2(u))$ $\forall u\in\T,$ is still feasible
(also notice that, in multi-tail  \HTN{s}, the pointwise-maximum works instead).
\end{mydefinition}
With Definitions~\ref{def:neg_cyc}~and~\ref{def:cert_leastfeasible} in mind, we can proceed with the following proof.
\begin{proof}[Proof of $\TTTtwoHTNC, \TTTtwoTTNC \in \text{co-}\textsc{NP}$]
To show $\TTTtwoHTNC\in \text{co-}\NP$, we shall exhibit certificates (of inconsistent networks) which we can verify in strongly polynomial~time.

The basic idea, in order to construct such certificates, is to consider what happens during the execution of algorithm \TTHTP,
assuming the input instance $\H=(\T, \A\cup\C_{\texttt{t}_2})$ is inconsistent.
If the \HTN $\H_0\triangleq (\T, \A)$ is already inconsistent, then it admits a negative hypercycle $\C_0$ (see \eg Theorem~4~in~\cite{CominPR17}).
Moreover, $\C_0$ can be checked in strongly polynomial time (see \eg Lemma~3~in~\cite{CominPR17}),
  so $\C_0$ is already a valid certificate of inconsistency.
Otherwise, let $\varphi_0$ be a CLFS of $\H_0$, then there must exist some $X_0\in \T$ and
$c_{X_0}=\bigvee_{i=1}^{k}(l_i\leq X_0\leq u_i)\in \C_{\texttt{t}_2}$ s.t. $\varphi_0(X_0)$ doesn't satisfy $c_{X_0}$.
If $\varphi_0(X_0)>u_k(=\max_i u_i)$, then $(\varphi_0, c_{X_0})$ is a valid certificate of inconsistency;
otherwise, let $i_0^*$ be the smallest $i\in [1,k]$ such that $\varphi_0(X_0)<l_i$. Let $\H_1$ be the auxiliary \HTN obtained from $\H_0$
just by adding one $\TONE$-constraint requiring $X_0$ to be scheduled at or after time $l_{i_0^*}$:
\[
  \H_1\triangleq \big(\T\cup\{z\}, \A\cup\{(z-T\leq 0)\mid T\in\T\}\cup\{z-X_0\leq - l_{i_0^*}\}\big).
  \]
Then, let $\varphi_1$ be a CLFS of $\H_1$ (notice $\varphi_1$ exists because $\H_0$ was assumed to be feasible).
Again, there must exist some $X_1\in \T$ and $c_{X_1}=\bigvee_{i=1}^{k}(l_i\leq X_1\leq u_i)\in \C_{\texttt{t}_2}$ s.t. $\varphi_1(X_1)$ doesn't satisfy $c_{X_1}$.
Again, if $\varphi_1(X_1)>u_k(=\max_i u_i)$, then $(\varphi_0, c_{X_0}, \varphi_1, c_{X_1})$ is a valid certificate of inconsistency;
otherwise, we can construct yet another auxiliary \HTN $\H_2$ by adding one $\TONE$-constraint requiring $X_1$ to be scheduled at or after time $l_{i_1^*}$,
for appropriate $i_1^*$ defined similarly as before. The construction iterates inductively and, generally,
  it leads to a sequence of the following kind (where the $\{\varphi_i\}_{i=0}^N$ are all CLFSs of the auxiliary \HTN{s}):
\[
  \Big( (\varphi_0, c_{X_0}), (\varphi_1, c_{X_1}), \ldots, (\varphi_N, c_{X_N}) \Big),
\]
where, notice, it's length is at most $N\leq d_{\C_{\texttt{t}_2}}$ (because, for each iteration of the construction,
  one disjunct of some $\TTWO$-constraint is ruled out forever). Each element of the sequence can be verified in strongly polynomial time,
  and the length of the same sequence is strongly polynomial; plus, Theorem~\ref{thm:htn_algo} implies the correctness of these certificates.
  This proves that $\TTTtwoHTNC\in \text{co-}\NP$.
  Thus, by Proposition~\ref{prop:inter-reducitble-HTNs}, $\TTTtwoTTNC\in \text{co-}\NP$ too.
\end{proof}

\begin{proof}[Proof of Theorem~\ref{thm:htn_algo}]
The correctness argument is similar to that for proving correctness of \TTP, the details are simpler
in this case because the only algorithm that is used to update
the schedule $\varphi$ monotonically is the VI algorithm of
Theorem~\ref{Teo:MainAlgorithms} (instead of Bellman-Ford and multiple calls to Dijkstra as it was for \TTP);
indeed, the VI algorithm of Theorem~\ref{Teo:MainAlgorithms} also provides the \emph{least} feasible
schedule in case the input \HTN is consistent, thus a similar (actually simpler) correctness argument still holds.
Also the time complexity of \TTHTP is a direct consequence of the complexity of the VI algorithm of Theorem~\ref{Teo:MainAlgorithms},
where the maximal weight measure $W$ is increased to $W_{\A, \C_{\texttt{t}_2}}$ (as defined above) in order
to take into account the lower-bound constraints (\ie those of type $(z-X\leq -l_{i^*})$) that
are (implicitly) introduced in $\A$ during the main while-loop; notice that,
during the computation, the VI algorithm is invoked on input $(\H, \varphi)$
so that at each iteration the scheduling values are initialized to those of the previous iteration (this ensures that they
are always updated monotonically upwards during the whole computation, thus amortizing the total cost among all iterations). Plus,
at each iteration at least one scheduling value is increased (\ie $\varphi(X)$ is increased to $l_{i^*}$ to satisfy the last $(z-X\leq -l_{i^*})$).
Finally, checking the while-loop's condition takes time $O(d_{\C_{\texttt{t}_2}}\cdot |\T|)$ total time, and since $d_{\C_{\texttt{t}_2}}\leq |\T|\cdot W_{\A, \C_{\texttt{t}_2}}$,
then $O(d_{\C_{\texttt{t}_2}}\cdot |\T|)=O(|\T|^2\cdot W_{\A,\C_{\texttt{t}_2}})$ (which, notice, it is not a bottleneck asymptotically).
So, the $\texttt{Time}_{\texttt{t}_2\texttt{HyTP()}}$ bound holds.
\end{proof}

\end{document}